\newcommand{\IF}{\mbox{{\bf if}\ }}
\newcommand{\FI}{\mbox{\ {\bf fi}}}
\newcommand{\DO}{\mbox{\ {\bf do}\ }}
\newcommand{\OD}{\mbox{\ {\bf od}}}
\newcommand{\WHILE}{\mbox{{\bf while}\ }}
\newcommand{\THEN}{\mbox{\ {\bf then}\ }}
\newcommand{\ELSE}{\mbox{\ {\bf else}\ }}
\newcommand{\HT}[3]{\ensuremath{\{{#1}\}\ {#2}\ \{{#3}\}}}
\newcommand{\T}{\mbox{{\bf true}}}
\newcommand{\F}{\mbox{{\bf false}}}
\newcommand{\new}{\mbox{{\bf cons}}}
\newcommand{\delete}{\mbox{{\bf dispose}}}
\newcommand{\update}[1]{ \langle  #1 \rangle }
\newcommand{\dispose}[1]{\langle #1 \rangle:=\bot}
\newcommand{\sep}{\mathrel{*}}
\newcommand{\sepimp}{\mathrel{-\kern-.2em *}}
\newcommand{\nothookrightarrow}{\kern.2em\not\kern-.2em\hookrightarrow}
\begin{document}
\begin{frontmatter}
  \title{Dynamic Separation Logic} 	   
   \author{Frank S. de Boer\thanksref{a}\thanksref{b}\thanksref{frb}}	     
   \author{Hans-Dieter A. Hiep\thanksref{a}\thanksref{b}\thanksref{hdh}}		
   \author{Stijn de Gouw\thanksref{c}\thanksref{sdg}}
   \address[a]{Leiden Institute for Advanced Computer Science (LIACS)\\Leiden University\\Leiden, the Netherlands}
   \address[b]{Computer Security group\\Centrum Wiskunde \& Informatica (CWI)\\Amsterdam, the Netherlands}
   \address[c]{Department of Computer Science\\Open University (OU)\\Heerlen, the Netherlands}
   \thanks[frb]{Email: \href{mailto:frb@cwi.nl}{\texttt{\normalshape
        frb@cwi.nl}}}
   \thanks[hdh]{Email:  \href{mailto:hdh@cwi.nl}{\texttt{\normalshape
        hdh@cwi.nl}}}
    \thanks[sdg]{Email:  \href{mailto:sdg@ou.nl}{\texttt{\normalshape
        sdg@ou.nl}}}
\begin{abstract} 
This paper introduces a dynamic logic extension of separation logic.
The assertion language of separation logic is extended with modalities for the
five  types of the basic instructions of separation logic: simple assignment,
look-up, mutation, allocation, and de-allocation.
The main novelty of the resulting dynamic logic is that it
allows to  combine different approaches to resolving these modalities.
One such approach is based on the standard weakest precondition calculus of separation logic.
The other approach introduced in this paper provides a novel alternative 
formalization   in the proposed dynamic logic extension of separation logic.
The soundness and completeness of this axiomatization has been formalized in the Coq theorem prover.

\end{abstract}
\begin{keyword}
Separation logic,
dynamic logic,
weakest precondition,
sequential programs.
\end{keyword}
\end{frontmatter}

\section{Introduction}\label{intro}
This paper describes a study into the expressive power
of separation logic (SL, for short) with regard to the formalization of 
\emph{weakest preconditions} \cite{dijkstra1976}.
To this end, we introduce a novel dynamic logic extension of
SL, which we abbreviate by DSL (for Dynamic Separation Logic).

SL \cite{Reynolds02} extends Hoare logic
for the specification and verification of heap manipulating programs 
in terms of pre- and postconditions.
The assertion language of SL features
the basic heap assertion $(x\mapsto e)$,  `$x$ points to $e$', 
which expresses that the variable
$x$ denotes the single  allocated memory location which stores the value of the expression $e$.
The so-called separating conjunction  ($p\sep q$)
allows to split the heap, that is,
the set of allocated memory locations and their contents,
into two disjoint parts one of which satisfies the conjunct $p$ and the other
satisfies $q$.
The
separating implication $(p\sepimp{} q)$, roughly, holds
if every extension of the heap satisfies $q$,
whenever  $p$ holds for the extension itself (separately).
For an introduction to SL and an extensive survey of the literature, intended for a broad audience, see the paper by A.~Chargu\'eraud \cite{10.1145/3408998}.

Dynamic logic \cite{Harel79} generalizes Hoare logics by
introducing for each statement of the underlying programming language a
corresponding modality, so that the formula $[S]p$ expresses
the weakest precondition of the statement $S$ with respect to the postcondition $p$.
Informally, $[S]p$ is valid if every terminating computation establishes $p$.
In this paper we extend
the assertion language of SL  with \emph{modalities} for the
five  types of the basic instructions of SL: simple assignment,
look-up, mutation, allocation, and de-allocation.
For any such basic instruction $S$, we then can introduce in the Hoare logic
the axiom
$$
\HT{[S]p}{S}{p}
$$
which is trivially sound and complete by definition
of $[S]p$.
In case $S$ is a simple assignment $x:=e$
and $p$ is an assertion in standard SL, 
we can resolve 
the weakest precondition $[S]p$, as in first-order dynamic logic, simply by \emph{substituting} every free occurrence of $x$ in $p$ by the expression $e$.\footnote{After suitable renaming of the bound variables in $p$ such that no variable of $e$ gets bound.}
In SL we can resolve $[S]p$, for any other basic instruction $S$,
by a formula with a hole $C_S(\cdot)$ in SL itself, such that
$C_S(p)$ is equivalent to $[S]p$.
For example, the assertion
$$
(\exists y(x\mapsto y))* ((x\mapsto e)\sepimp p)
$$
states that the heap can be split in a sub-heap which consists of a single memory cell
denoted by $x$ such
that $p$ holds for every extension of the other part with a single memory
cell denoted by $x$ and which contains the value of $e$.
It follows that this assertion
is equivalent to $[[x]:=e]p$, where the \emph{mutation} instruction
$[x]:=e$ assigns the value of the expression $e$ to the heap location denoted by the variable $x$.

The main  contribution of this paper is a complementary approach to
resolving $[S]p$, for any  basic instruction.
In this approach we obtain an alternative characterization of the weakest precondition
$[S]p$ by a novel axiomatization  of the modalities
in DSL.
This axiomatization allows for
a characterization of $[S]p$ \emph{compositionally} in terms of the syntactical structure
of $p$.

O'Hearn, Reynolds, and Yang introduced  local axioms \cite{DBLP:conf/csl/OHearnRY01}
and show how to derive from these local axioms a weakest precondition
axiomatization of the basic instructions in SL,
using the frame rule and the separating implication for expressing the weakest
precondition. 
However, the separating implication is actually not needed to prove completeness of the local axioms for simple assignments, look-up, allocation, and de-allocation.
We illustrate the expressiveness of DSL by
extending this result to the local mutation axiom.
We further illustrate the expressiveness of DSL by
a novel \emph{strongest postcondition} axiomatization.

Using the proof assistant
Coq, we have formally verified the  soundness and completeness proofs of the  axiomatization of the DSL modalities.
All our results can be readily extended to a programming language
involving (sequential) control structures such as loops.

\paragraph{Acknowledgments} The authors are grateful for the constructive feedback provided by the anonymous referees.

\section{Syntax and semantics}\label{sec:sl}
We follow the presentation of SL in
\cite{Reynolds02}.
A heap\footnote{All italicized variables are typical meta-variables, and we use primes and subscripts for other meta-variables of the same type, e.g.~$h$, $h'$, $h''$, $h_1$, $h_2$ are all heaps.} $h$ is represented by a (finitely-based) \emph{partial} function ${\mathbb{Z}\rightharpoonup \mathbb{Z}}$ and the domain of $h$ is denoted by $\mbox{\it dom}(h)$.
We write $h(n)=\bot$ if $n\not\in\mbox{\it dom}(h)$.
The heaps $h,h'$ are disjoint iff $\mbox{\it dom}(h)\cap \mbox{\it dom}(h')=\emptyset$.
A heap $h$ is partitioned in $h_1$ and $h_2$, denoted by $h=h_1\uplus h_2$, iff
$h_1$ and $h_2$ are disjoint,
$\mbox{\it dom}(h)= \mbox{\it dom}(h_1)\cup  \mbox{\it dom}(h_2)$,
and $h(n)=h_i(n)$ if $n\in\mbox{\it dom}(h_i)$ for $i\in\{1,2\}$.

$V$ denotes a countably infinite set of integer variables, with typical element~$x$.
A store $s$ is a total function $V\rightarrow \mathbb{Z}$.
We abstract from the syntax of arithmetic expressions $e$,
and Boolean expressions $b$.
By $\mathit{var}(e)$ (resp. $\mathit{var}(b)$) we denote the finite set of variables that occur in $e$ (resp. $b$).
We have the Boolean constants $\T$ and $\F$, and $(e_1 = e_2)$ is a Boolean expression given arithmetic expressions $e_1$ and $e_2$.
By $s(e)$ we denote the integer value of $e$ in~$s$,
and by $s(b)$ we denote the Boolean value of $b$ in~$s$.
Following \cite{Reynolds02} expressions thus do not refer
to the heap.
By $s[x:=v]$ and $h[n:=v]$ we denote the result of updating
the value of the variable $x$ and the location
$n$, respectively.
The definition of $h[n:=v]$ does not require
that $n\in \mbox{\it dom}(h)$.
More specifically, we have
$$
h[n:=v](m)=
\left\{
\begin{array}{ll}
v    & \mbox{if $n=m$} \\
h(m) &\mbox{otherwise}
\end{array}
\right.
$$
Thus, $\mbox{\it dom}(h[n:=v])=\mbox{\it dom}(h)\cup\{n\}$.
For heaps we also define the clearing of a location, denoted by $h[n:=\bot]$.
We have $h[n:=\bot](m)=\bot$ if $n = m$, and
$h[n:=\bot](m)=h(m)$ otherwise.
Similarly, we have $\mathit{dom}(h[n:=\bot])=\mathit{dom}(h)\setminus\{n\}$.

Following \cite{Reynolds02}, we 
have the following basic instructions:
$x:=e$ (simple assignment), $x:=[e]$ (look-up),
$[x]:=e$ (mutation), $x:=\new(e)$ (allocation), $\delete(x)$ (de-allocation).
Just like \cite{10.1145/373243.375719}, 
\begin{quote}
\emph{We will not give a full syntax of [statements], as the treatment of conditionals and looping
statements is standard. Instead, we will concentrate on assignment statements, which is where the main novelty of the approach lies.}
\end{quote}
The successful execution of any basic instruction $S$ is denoted by
$\langle S,h,s\rangle \Rightarrow (h',s')$,
whereas $\langle S,h,s\rangle \Rightarrow \mbox{\bf fail}$
denotes a failing execution (e.g. due to access of a `dangling pointer').
See Figure~\ref{fig:syntax} for their semantics (and see Appendix, Figure~\ref{fig:full-syntax}, for the full syntax and semantics).

\begin{figure}[t]
$\langle x:=e,h,s\rangle \Rightarrow (h,s[x:=s(e)])$,\\
$\langle x:=[e],h,s\rangle \Rightarrow (h,s[x:=h(s(e))])$ if $s(e)\in\mbox{\it dom}(h)$,\\
$\langle x:=[e],h,s\rangle \Rightarrow \mbox{\bf fail}$ if $s(e)\not\in\mbox{\it dom}(h)$,\\
$\langle [x]:=e,h,s\rangle \Rightarrow (h[s(x):=s(e)],s)$ if $s(x)\in\mbox{\it dom}(h)$,\\
$\langle [x]:=e,h,s\rangle \Rightarrow \mbox{\bf fail}$ if $s(x)\not\in\mbox{\it dom}(h)$,\\
$\langle x:=\new(e),h,s\rangle \Rightarrow (h[n:=s(e)],s[x:=n])$ where $n\not\in\mbox{\it dom}(h)$.\\
$\langle \delete(x),h,s\rangle \Rightarrow (h[s(x):=\bot],s)$ if $s(x)\in\mbox{\it dom}(h)$,\\
$\langle \delete(x),h,s\rangle \Rightarrow \mbox{\bf fail}$ if $s(x)\not\in\mbox{\it dom}(h)$.
\caption{Semantics of basic instructions of heap manipulating programs.}
\label{fig:syntax}
\end{figure}

We follow \cite{10.1145/373243.375719} in the definition of the syntax and semantics of the assertion language of SL but we use a different atomic `weak points to' formula (as in \cite{reynolds2000intuitionistic} and \cite{DemriLM21}). In DSL we have additionally a modality for each statement $S$, which has
highest binding priority.
\[
p,q \Coloneqq b \mid (e\hookrightarrow e')  \mid (p\to q) \mid (\forall x p) \mid (p \sep q) \mid (p \sepimp q) \mid [S]p
\]
By $h,s\models p$ we denote the truth relation of classical SL, see Figure~\ref{fig:ass}.
Validity of $p$ is denoted by $\models p$. 
Semantics of DSL extends the semantics of SL by giving semantics to the modality, expressing the weakest precondition.
We further have the usual abbreviations:
$\lnot p$ denotes ${(p\to \F)}$,
$(p\lor q)$ denotes $(\lnot p\to  q)$ (negation has binding priority over implication), $p\equiv q$ denotes
$(p\to q)\wedge (q\to p)$,
$(\exists x p)$ denotes $\lnot(\forall x(\lnot p))$
and note that $x$ is bound in $p$.
By logical connective we mean the connectives $\lnot,\land,\lor,\to,\forall,\exists$, and by separating connective we mean $\sep$ and $\sepimp$.
Further,
${(e\hookrightarrow -)}$ denotes ${\exists x(e \hookrightarrow x)}$ for a fresh $x$,
$\mbox{\bf emp}$ denotes ${\forall x(x\nothookrightarrow -)}$,
and ${(e\mapsto e')}$ denotes ${(e\hookrightarrow e')}\land (\forall x({(x\hookrightarrow -)}\to x=e))$ for a fresh~$x$.
We use $\not\hookrightarrow$ and $\not=$ as negations of the predicate as usual, and in particular $(e\nothookrightarrow -)$ is $\lnot\exists x(e \hookrightarrow x)$. We may drop matching parentheses if doing so would not give rise to ambiguity. Note that
$h,s\models \mbox{\bf emp}$ iff $\mbox{\it dom}(h)=\emptyset$,
and $h,s\models (e\mapsto e')$ iff $\mbox{\it dom}(h)=\{s(e)\}$ and $h(s(e))=s(e')$.
An assertion is \emph{first-order} if its construction does not involve separating connectives or modalities.

\begin{figure}[t]
$h,s\models b$ iff $s(b)=\T$,\\
$h,s\models (e\hookrightarrow e')$ iff $s(e)\in \mbox{\it dom}(h)$ and $h(s(e))=s(e')$,\\
$h,s\models (p\land q)$ iff $h,s\models p$ and $h,s\models q$,\\
$h,s\models (p\to q)$ iff $h,s\models p$ implies $h,s\models q$,\\
$h,s\models (\forall x p)$ iff $h,s[x:=n]\models p$ for all $n$,\\
$h,s\models (p\sep q)$ iff $h_1,s\models p$ and $h_2,s\models q$ for some $h_1, h_2$ such that $h = h_1\uplus h_2$,\\
$h,s\models (p\sepimp q)$ iff $h',s \models p$ implies $h'',s\models q$ for all $h',h''$ such that $h'' = h\uplus h'$,\\
$h,s\models [S]p$ iff $\langle S,h,s\rangle\not\Rightarrow\mbox{\bf fail}$ and $h',s'\models p$ for all $h',s'$ such that $\langle S,h,s\rangle\Rightarrow(h',s')$.
\caption{Semantics of Dynamic Separation Logic.}
\label{fig:ass}
\end{figure}

The assertion ${(e\hookrightarrow e')}$ is implied by ${(e\mapsto e')}$, and to express the latter using the former requires the use of separating connectives (i.e. ${(e\hookrightarrow e')}$ is equivalent to ${\T\sep (e\mapsto e')}$), whereas our definition of ${(e\mapsto e')}$ requires only logical connectives, and thus we use ${(e\hookrightarrow e')}$ as atomic formula.

A specification $\HT{p}{S}{q}$ is a triple that consists of a precondition $p$, a program $S$, and a postcondition $q$.
Specifications are interpreted in the sense of strong partial correctness, which ensures absence of explicit failure.
Formally, following \cite{Reynolds02}, the validity of a specification, denoted $\models \HT{p}{S}{q}$, is defined as:
if $h,s\models p$, then $\langle S,h,s\rangle\not\Rightarrow \mbox{\bf fail}$ and also $\langle S,h,s\rangle\Rightarrow (h',s')$ implies $h',s'\models q$ for all $h',s'$.


Note that we have that
$\models \HT{[S]q}{S}{q}$ holds,
and $\models \HT{p}S{q}$ implies $\models p\to [S]q$, that is, $[S]q$ is the weakest precondition of statement $S$ and postcondition $q$.

\section{A sound and complete axiomatization of DSL}
In dynamic logic axioms are introduced to simplify formulas in which modalities occur. 
For example, we have the following basic equivalences
\mbox{\bf E1-3}
for simple assignments.

\begin{lemma}[Basic equivalences]\label{lem:BE}
Let $S$ denote a simple assignment $x:=e$ and $\circ$ denote
a (binary) logical or separating connective.
\begin{align*}
[S]\F &\equiv \F \tag{\mbox{\bf E1}}\\
[S](p\circ  q) &\equiv [S] p\circ  [S] q \tag{\mbox{\bf E2}}\\
[S](\forall y p) & \equiv \forall y ([S] p) \tag{\mbox{\bf E3}}\\
\end{align*}
In {\bf E3} we assume that $y$ does not
appear in $S$, neither in the left-hand-side of the assignment
$S$ nor in its right-hand-side.
\end{lemma}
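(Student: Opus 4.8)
The plan is to unfold the semantics on both sides and reduce each equivalence to a simple observation about how a simple assignment acts on configurations — namely that $x := e$ is deterministic, total, never fails, and leaves the heap untouched. Fix an arbitrary heap $h$ and store $s$, and let $s' = s[x := s(e)]$, so that $\langle S, h, s\rangle \Rightarrow (h, s')$ is the unique successful execution and no failing execution exists. By the semantic clause for $[S]p$ in Figure~\ref{fig:ass}, this already gives the key reduction
\[
  h,s \models [S]p \quad\text{iff}\quad h,s' \models p,
\]
valid for every assertion $p$. The three equivalences then follow by instantiating $p$ and pushing this reduction through the relevant semantic clause.

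For \textbf{E1}, $h,s \models [S]\F$ iff $h,s' \models \F$, which never holds, and $h,s \models \F$ never holds; so both sides are equivalent to falsity. For \textbf{E3}, $h,s \models [S](\forall y\, p)$ iff $h,s' \models \forall y\, p$ iff $h,s'[y:=n] \models p$ for all $n$; since $y$ does not occur in $S$, we have $s'[y:=n] = s[y:=n][x := s[y:=n](e)]$ — this uses both that $y \neq x$ (so the updates commute) and that $y \notin \mathit{var}(e)$ (so $s[y:=n](e) = s(e)$) — hence this is equivalent to $h, s[y:=n] \models [S]p$ for all $n$, i.e.\ $h,s \models \forall y([S]p)$.

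The case \textbf{E2} requires a small case split on the connective $\circ$, since different connectives quantify over different objects. For a logical connective ($\land$, $\to$), the semantic clause only recombines the truth values at the same $(h,s)$, so after the reduction at $s'$ it distributes immediately. For $\sep$: $h,s \models [S](p \sep q)$ iff $h, s' \models p \sep q$ iff there exist $h_1, h_2$ with $h = h_1 \uplus h_2$, $h_1, s' \models p$ and $h_2, s' \models q$; applying the reduction in the other direction to each conjunct gives $h_1, s \models [S]p$ and $h_2, s \models [S]q$, i.e.\ $h,s \models [S]p \sep [S]q$. The crucial point making this work is that $x := e$ does not modify the heap, so the same partition $h = h_1 \uplus h_2$ witnesses both sides. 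The case $\sepimp$ is analogous, quantifying over extensions $h'' = h \uplus h'$ rather than partitions, and again relies on heap-invariance of $S$.

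I do not expect a genuine obstacle here; the only thing to be careful about is the side condition in \textbf{E3} and the bookkeeping of substitution/store-update commutation, and — in \textbf{E2} — to note explicitly that because $S$ leaves the heap fixed the heap partition (resp.\ heap extension) can be shared between the two sides of the equivalence. One could alternatively package all of this by first proving the lemma $[x:=e]p \equiv p[e/x]$ (after suitable renaming of bound variables), after which \textbf{E1}--\textbf{E3} become instances of the fact that syntactic substitution commutes with $\F$, with binary connectives, and with $\forall y$ when $y$ is not captured; but the direct semantic argument above is shorter and avoids making the renaming precise.
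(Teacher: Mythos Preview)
Your proposal is correct and is essentially the direct semantic argument the paper has in mind; the paper itself is extremely terse here, saying only that the proofs ``proceed by a straightforward induction on the structure of $p$'' with the base cases handled by the first-order substitution lemma. Your central reduction $h,s\models [S]p$ iff $h,s'\models p$ (with $s'=s[x:=s(e)]$, independent of $h$) is exactly the content of that ``straightforward'' step, and you correctly isolate the two places where care is needed---the store-update commutation in \textbf{E3} and the heap-invariance of $S$ for the separating connectives in \textbf{E2}---which the paper leaves entirely implicit.
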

The proofs of these equivalences proceed by a straightforward induction on
the structure of $p$, where the base cases of Boolean expressions and the weak points to predicate
are handled by a straightforward extension of the 
\emph{substitution lemma} for standard first-order logic.
By $b[e/x]$ we denote the result of replacing every occurrence of $x$ in 
the Boolean expression $b$ by the expression $e$ (and similar for arithmetic expressions).

\begin{lemma}[Substitution lemma]\label{lem:store-subst}
\begin{equation}
[x:=e]b \equiv b[e/x]\qquad [x := e](e'\hookrightarrow e'') \equiv (e'[e/x]\hookrightarrow e''[e/x])\tag{\mbox{\bf E4}}
\end{equation}
\end{lemma}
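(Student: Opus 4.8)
The plan is to prove both equivalences in \textbf{E4} by a direct unfolding of the semantics of the simple-assignment modality, reducing them to the classical substitution lemma for the store. Recall from Figure~\ref{fig:ass} that $h,s\models [x:=e]p$ holds iff $\langle x:=e,h,s\rangle\not\Rightarrow\mbox{\bf fail}$ and $h',s'\models p$ for all $h',s'$ with $\langle x:=e,h,s\rangle\Rightarrow(h',s')$; and from Figure~\ref{fig:syntax} the simple assignment never fails and has the unique successor $(h,s[x:=s(e)])$. Hence $h,s\models[x:=e]p$ iff $h,s[x:=s(e)]\models p$. So it suffices to show $h,s[x:=s(e)]\models b$ iff $h,s\models b[e/x]$, and $h,s[x:=s(e)]\models(e'\hookrightarrow e'')$ iff $h,s\models (e'[e/x]\hookrightarrow e''[e/x])$.

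For the first, I would invoke the standard substitution lemma for arithmetic and Boolean expressions: $s[x:=s(e)](b) = s(b[e/x])$, proved by a routine induction on the structure of $b$ (the only interesting clause being the variable occurrences, where $x$ is replaced by $e$ and all other variables are untouched by the update). Then $h,s[x:=s(e)]\models b$ iff $s[x:=s(e)](b)=\T$ iff $s(b[e/x])=\T$ iff $h,s\models b[e/x]$. Note the heap plays no role here since expressions do not refer to the heap.

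For the weak-points-to case, I would unfold: $h,s[x:=s(e)]\models(e'\hookrightarrow e'')$ iff $s[x:=s(e)](e')\in\mbox{\it dom}(h)$ and $h(s[x:=s(e)](e'))=s[x:=s(e)](e'')$. Applying the arithmetic substitution lemma to $e'$ and $e''$ rewrites these as $s(e'[e/x])\in\mbox{\it dom}(h)$ and $h(s(e'[e/x]))=s(e''[e/x])$, which is exactly $h,s\models(e'[e/x]\hookrightarrow e''[e/x])$. Again the heap $h$ is carried along unchanged by the assignment, which is what makes the argument go through cleanly.

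The only subtlety — and it is the point flagged in the footnote earlier in the paper — is the usual capture-avoidance caveat: substitution $b[e/x]$ is understood after renaming bound variables so that no variable of $e$ is captured. Since $b$ and $e'\hookrightarrow e''$ are built from expressions with no binders, there is in fact nothing to rename in these base cases, so this obstacle is vacuous here; it becomes relevant only when \textbf{E1--E4} are combined in the outer induction on $p$ for the quantifier clause \textbf{E3}. I expect no genuine difficulty in \textbf{E4} itself — the work is entirely in the expression-level substitution lemma, which is classical.
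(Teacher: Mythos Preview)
Your proposal is correct and follows essentially the same approach as the paper: both reduce the claim to the classical expression-level substitution lemma $s(e'[e/x])=s[x:=s(e)](e')$, noting that expressions do not refer to the heap. Your version is simply more explicit in unfolding the semantics of the modality and of $(e'\hookrightarrow e'')$, whereas the paper compresses this into a one-line remark.
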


\begin{proof}
This lemma follows from the semantics of 
simple assignment modality and the substitution lemma
of first-order expressions: $s(e'[e/x])=s[x:=s(e)](e')$. Note that expressions do not refer to the heap.
\end{proof}

The above equivalences \mbox{\bf E1-3} do not hold in general
for the other basic instructions.
For example, we have
${[x:=[e]]\F\equiv \neg (e \hookrightarrow -)}$.
On the other hand, ${[x:=\new(0)]\F\equiv \F}$, but
${[x:=\new(0)](x\not=0)}$ is not equivalent to
$\neg ([x:=\new(0)](x=0))$, because $[x:=\new(0)](x\not=0)$
is equivalent to $(0\hookrightarrow -)$ (`zero is allocated'),
whereas $\neg ([x:=\new(0)](x=0)$) expresses that
$(n\not\hookrightarrow -)$, for some $n\not=0$
(which holds for any finite heap).

The above equivalences \mbox{\bf E1-3}, with \mbox{\bf E2}
restricted to the (standard) logical connectives,
\emph{do} hold for the \emph{pseudo} instructions $\update{x} := e$,
a so-called \emph{heap update}, and
$\dispose{x}$, a so-called \emph{heap clear}.
These pseudo instructions are defined by the transitions
$$\langle \update{x} :=e,h,s\rangle \Rightarrow (h[s(x):=s(e)],s)
\mbox{ and }
\langle \dispose{x},h,s\rangle \Rightarrow (h[s(x):=\perp],s)$$
In contrast to  the mutation and de-allocation instructions, these pseudo-instructions do not require that $s(x)\in{\it dom}(h)$, e.g.,
if $s(x)\not\in{\it dom}(h)$ then the heap update $\update{x} :=e$ extends the domain of the heap,
whereas $[x] := e$ leads to failure in that case.
From a practical viewpoint, the heap update and heap clear pseudo-instructions are `lower level' instructions, e.g. in processors that implement virtual memory (where an operating system allocates memory on the fly whenever a program performs a write to a virtual address that is not allocated),
and on top of these instructions efficient memory allocation algorithms are implemented, e.g. malloc and free in C. 
In the following lemma we give an axiomatization in DSL of the basic SL instructions
in terms of simple assignments and these two pseudo-instructions.
For comparison we also give the standard SL axiomatization
\cite{Reynolds02,FAISALALAMEEN201673,BannisterHK18}.

\begin{lemma}[Axioms basic instructions]\label{lem:ABI}
\begin{align*}
[x:=[e]]p &\equiv \exists y((e\hookrightarrow y)\wedge [x:=y]p), \tag{\mbox{\bf E5}}\\
[[x]:=e]p & \equiv 
\left\{
\begin{array}{l}
(x\hookrightarrow -)\wedge [\update{x}:=e]p \tag{\mbox{\bf E6}}\\
(x\mapsto -)\sep ((x\mapsto e) \sepimp p)\\
\end{array}
\right.\\
[x:=\new(e)]p &
\equiv 
\left\{
\begin{array}{l}
\forall x( (x\not\hookrightarrow-) \rightarrow [\update{x}:=e]p) \tag{\mbox{\bf E7}}\\
\forall x( (x\mapsto e) \sepimp p)\\
\end{array}
\right.\\
[\delete(x)]p &\equiv 
\left\{
\begin{array}{l}
(x\hookrightarrow -)\wedge [\dispose{x}]p \tag{\mbox{\bf E8}}\\
(x\mapsto -) \sep p
\end{array}
\right.
\end{align*}

Note that $[x:=y]p$ in \mbox{\bf E5} reduces
to $p[y/x]$ by \mbox{\bf E1-4}.
For technical convenience only, we require
in the axioms for ${x:=\new(e)}$ that
$x$ does not appear in $e$ (see Section~\ref{sec:extensions} to lift this restriction).
\end{lemma}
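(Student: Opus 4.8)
The plan is to prove each of \mbox{\bf E5}--\mbox{\bf E8} by unfolding, on the left, the semantics of the modality (the clause for $[S]p$ in Figure~\ref{fig:ass} together with the transitions of Figure~\ref{fig:syntax}) and, on the right, the semantics of the displayed assertion, and then checking that the two resulting conditions on $(h,s)$ are literally the same. The observation that makes this uniform is that each of the five basic instructions and each of the two pseudo-instructions is deterministic except for the choice of fresh address in $x:=\new(e)$; hence $h,s\models[S]p$ always unpacks into a \emph{no-failure} condition conjoined with \emph{$p$ holds in the (unique) successor state(s)}. For the pseudo-instructions the no-failure condition is vacuous, while for $[x]:=e$ and $\delete(x)$ it is exactly $s(x)\in\mathit{dom}(h)$, which is precisely the meaning of $(x\hookrightarrow -)$.

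\emph{The DSL (first) forms.} For \mbox{\bf E5} one computes $h,s\models[x:=[e]]p$ iff $s(e)\in\mathit{dom}(h)$ and $h,s[x:=h(s(e))]\models p$; on the right, since $y$ is fresh we have $s[y:=n](e)=s(e)$, so $(e\hookrightarrow y)$ forces $s(e)\in\mathit{dom}(h)$ and $n=h(s(e))$, and then $[x:=y]p$ contributes $h,s[y:=n][x:=n]\models p$, which by agreement on the free variables of $p$ (again $y$ fresh) is $h,s[x:=h(s(e))]\models p$; the remark that $[x:=y]p\equiv p[y/x]$ follows from \mbox{\bf E1}--\mbox{\bf E4} and is only needed to present the final form. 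For \mbox{\bf E6} and \mbox{\bf E8} (first forms) the real instruction fails exactly when $s(x)\notin\mathit{dom}(h)$, i.e.\ when $\lnot(x\hookrightarrow-)$, and otherwise acts as the never-failing pseudo-instruction $\update{x}:=e$, resp.\ $\dispose{x}$; so $[[x]:=e]p$ splits as $(x\hookrightarrow-)\wedge[\update{x}:=e]p$ and symmetrically for de-allocation. For \mbox{\bf E7}, $x:=\new(e)$ never fails and its successors are exactly the $(h[n:=s(e)],s[x:=n])$ with $n\notin\mathit{dom}(h)$; recognising $h,s[x:=n]\models(x\nothookrightarrow-)$ as $n\notin\mathit{dom}(h)$, and, using $x\notin\mathit{var}(e)$ so that $s[x:=n](e)=s(e)$, recognising $h,s[x:=n]\models[\update{x}:=e]p$ as $h[n:=s(e)],s[x:=n]\models p$, gives the equivalence with $\forall x((x\nothookrightarrow-)\rightarrow[\update{x}:=e]p)$.

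\emph{The standard-SL (second) forms.} Here I would spell out the quantification over sub-heaps, using two facts: $h',s\models(e\mapsto e')$ holds iff $h'$ is the singleton heap $\{s(e)\mapsto s(e')\}$ (from the definition of $\mapsto$ via $\hookrightarrow$ plus the uniqueness clause, and the definition of $(e\mapsto-)$), and $h[n:=v]=h\uplus\{n\mapsto v\}$ precisely when $n\notin\mathit{dom}(h)$, while $h=\{s(x)\mapsto h(s(x))\}\uplus h[s(x):=\bot]$ precisely when $s(x)\in\mathit{dom}(h)$. Then: in \mbox{\bf E6}, $(x\mapsto-)\sep(\cdot)$ isolates the cell $s(x)$ (forcing $s(x)\in\mathit{dom}(h)$ and fixing the remainder as $h[s(x):=\bot]$), and $(x\mapsto e)\sepimp p$ adds back the unique cell $\{s(x)\mapsto s(e)\}$, yielding $h[s(x):=s(e)]$; in \mbox{\bf E7}, the outer $\forall x$ ranges over the candidate address $n$, the disjointness side-condition implicit in $\uplus$ in the clause for $\sepimp$ enforces $n\notin\mathit{dom}(h)$, and $(x\mapsto e)$ fixes the stored value; in \mbox{\bf E8}, $(x\mapsto-)\sep p$ splits off the cell $s(x)$ (forcing $s(x)\in\mathit{dom}(h)$) and evaluates $p$ at the remainder $h[s(x):=\bot]$.

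The only place requiring real care is this last paragraph: one must verify that the author's $\mapsto$, defined through $\hookrightarrow$ and a universally quantified uniqueness clause rather than as a primitive atom, genuinely pins down a singleton heap with the prescribed content, and that the definedness/disjointness conditions built into $\uplus$ in the clauses for $\sep$ and $\sepimp$ match, on the nose, the ``$s(x)\in\mathit{dom}(h)$'' and ``$n\notin\mathit{dom}(h)$'' side-conditions occurring in the operational semantics. By contrast, the variable-freshness bookkeeping (the fresh $y$ in \mbox{\bf E5}, the hypothesis $x\notin\mathit{var}(e)$ in \mbox{\bf E7}, and the fresh bound variable hidden inside $(x\mapsto-)$) is routine via the substitution lemma and agreement on free variables. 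All of these arguments are exactly what we have mechanised in Coq.
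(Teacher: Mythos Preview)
Your proposal is correct and follows exactly the approach the paper itself indicates: the authors state that the proofs of these equivalences ``are straightforward (consist simply of expanding the semantics of the involved modalities) and therefore omitted,'' which is precisely the unfolding-and-matching argument you outline. Your additional care about the derived $\mapsto$ pinning down a singleton heap and about the $\uplus$ side-conditions matching the operational ones is appropriate and goes a bit beyond what the paper spells out, but it is in the same spirit and introduces no divergence in method.
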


In the sequel  \mbox{\bf E5-8} refer to the corresponding
DSL equivalences.
The proofs of these equivalences are straightforward 
(consist simply of expanding the semantics of the involved modalities) and
therefore omitted.

We have the following  SL axiomatization of the 
heap update and heap clear pseudo-instructions.
\begin{align*}
[\update{x} := e]p &\equiv ((x\mapsto -)\sep ((x\mapsto e)\sepimp p))\lor ((x\nothookrightarrow -)\land ((x\mapsto e)\sepimp p))\\
[\dispose{x}]p &\equiv ((x\mapsto -)\sep p)\lor ((x\nothookrightarrow -)\land p)
\end{align*}
This axiomatization thus requires a case distinction between
whether or not $x$ is allocated.

For the complementary approach, we want to resolve the modalities
for the heap update and heap clear instructions compositionally in terms of $p$.
What thus remains for a complete axiomatization is a characterization of 
$[S]b$, $[S] (e\hookrightarrow e')$,
$[S](p\sep q)$, and $[S](p\sepimp q)$, where $S$ denotes 
one of the two  pseudo-instructions.
Lemma \ref{def:update} provides
an axiomatization in DSL of  a heap update.

\begin{lemma}[Heap update]\label{def:update}
We have the following equivalences for the heap update modality.
\begin{align*}
[\update{x}:=e]b &\equiv  b, \tag{\mbox{\bf E9}}\\
[\update{x}:=e](e'\hookrightarrow e'') &\equiv  (x=e' \wedge e''=e)\vee (x\not=e'\wedge e'\hookrightarrow e''), \tag{\mbox{\bf E10}}\\
[\update{x}:=e](p\sep q) &\equiv ([\update{x}:=e]p\sep q')\vee (p' \sep [\update{x}:=e]q), \tag{\mbox{\bf E11}}\\
[\update{x}:=e](p\sepimp q) &\equiv  p'\sepimp [\update{x}:=e]q, \tag{\mbox{\bf E12}}
\end{align*}
where $p'$ abbreviates $p\wedge (x\not\hookrightarrow-)$ and, similarly, $q'$ abbreviates $q\wedge (x\not\hookrightarrow-)$.
\end{lemma}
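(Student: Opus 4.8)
The plan is to prove each of the four equivalences \textbf{E9--E12} by unfolding the semantics of the heap update pseudo-instruction, namely $\langle \update{x} := e, h, s\rangle \Rightarrow (h[s(x):=s(e)], s)$, together with the semantics clause $h,s\models [\update{x}:=e]p$ iff $h[s(x):=s(e)], s \models p$ (note the heap update never fails, so the failure disjunct is vacuous). Throughout, write $h' = h[s(x):=s(e)]$ for brevity; the key arithmetic fact is $\mathit{dom}(h') = \mathit{dom}(h) \cup \{s(x)\}$ and $h'(s(x)) = s(e)$, while $h'(m) = h(m)$ for $m \neq s(x)$.

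For \textbf{E9}: since $b$ is a Boolean expression, $h',s\models b$ iff $s(b) = \T$ iff $h,s\models b$, because expressions do not refer to the heap. For \textbf{E10}: $h',s\models (e'\hookrightarrow e'')$ iff $s(e')\in\mathit{dom}(h')$ and $h'(s(e')) = s(e'')$. Split on whether $s(x) = s(e')$: if so, the condition reduces to $s(e'') = s(e)$, matching the first disjunct $x = e' \wedge e'' = e$; if not, $s(e')\in\mathit{dom}(h')$ iff $s(e')\in\mathit{dom}(h)$ and $h'(s(e')) = h(s(e'))$, matching $x\neq e' \wedge e'\hookrightarrow e''$. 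These two cases are exclusive and exhaustive, giving the disjunction.

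For \textbf{E12}, the separating implication: $h',s\models (p\sepimp q)$ iff for all $h_1, h_2$ with $h_2 = h' \uplus h_1$ we have $h_1,s\models p \Rightarrow h_2,s\models q$. The right-hand side $h,s\models p' \sepimp [\update{x}:=e]q$ says: for all $h_1, h_2$ with $h_2 = h\uplus h_1$, if $h_1,s\models p\wedge(x\nothookrightarrow-)$ then $h_2[s(x):=s(e)], s\models q$. The bridge between the two is the observation that since $s(x)\in\mathit{dom}(h')$, any $h_1$ disjoint from $h'$ must have $s(x)\notin\mathit{dom}(h_1)$, i.e.\ $h_1,s\models (x\nothookrightarrow-)$; and conversely, for $h_1$ disjoint from $h$ with $s(x)\notin\mathit{dom}(h_1)$, the partition $h'\uplus h_1$ equals $(h\uplus h_1)[s(x):=s(e)]$. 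So the extensions ranged over on the left are exactly the $h'\uplus h_1$ with $h_1$ disjoint from $h$ and avoiding $s(x)$, which is what the $p'$ conjunct on the right enforces; the updated heaps coincide. For \textbf{E11}, the separating conjunction: $h',s\models (p\sep q)$ iff $h' = h_1\uplus h_2$ with $h_1,s\models p$, $h_2,s\models q$. The location $s(x)$ lies in exactly one of $h_1, h_2$ (it is in $\mathit{dom}(h')$). If $s(x)\in\mathit{dom}(h_1)$, then $h_1 = \hat h_1[s(x):=s(e)]$ where $\hat h_1 = h_1[s(x):=\bot]$ if $s(x)\in\mathit{dom}(h)$ else $\hat h_1 = h_1$ with $s(x)$ removed, and crucially $s(x)\notin\mathit{dom}(h_2)$, so $h_2,s\models q\wedge(x\nothookrightarrow-) = q'$; this is the first disjunct $[\update{x}:=e]p \sep q'$. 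The symmetric case gives the second disjunct. Conversely, given a witnessing split for either disjunct, one reassembles a split of $h'$ by applying the update to the appropriate half.

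The main obstacle will be the bookkeeping in \textbf{E11}: one must be careful that when $s(x)\notin\mathit{dom}(h)$ the update genuinely enlarges the domain, so the decomposition of $h'$ into $h_1\uplus h_2$ need not restrict to a decomposition of $h$, and one has to check that "undoing" the update on the half containing $s(x)$ (which may mean either resetting its old value or deleting it entirely) yields a legitimate partition of $h$ witnessing $[\update{x}:=e]p$ on that half. The $(x\nothookrightarrow-)$ conjunct on the non-updated half is exactly the side condition that makes this reversible, and verifying the round-trip in both directions — especially that $\hat h_1 \uplus h_2 = h$ and $\hat h_1[s(x):=s(e)] = h_1$ — is where the care is needed. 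All of this has been machine-checked in Coq, as stated.
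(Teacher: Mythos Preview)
Your proposal is correct and follows essentially the same route as the paper: unfold the modality semantics to pass to $h' = h[s(x):=s(e)]$, handle \textbf{E9}--\textbf{E10} directly, and for \textbf{E11}--\textbf{E12} exploit that $s(x)\in\mathit{dom}(h')$ forces any disjoint extension or the complementary half of a split to avoid $s(x)$, which is precisely what the $(x\nothookrightarrow-)$ conjunct records. Your discussion of the bookkeeping in \textbf{E11} when $s(x)\notin\mathit{dom}(h)$ is in fact more explicit than the paper's own argument, which writes $h = h_1[s(x):=h(x)]\uplus h_2$ somewhat informally (tacitly reading $h(x)=\bot$ as a clear when $s(x)\notin\mathit{dom}(h)$); your observation that undoing the update may mean either restoring the old value or deleting the cell is exactly the missing case distinction.
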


These equivalences we can informally explain as follows.
Since the heap update $\update{x}:=e$ does not affect the store, and the evaluation of a Boolean condition
$b$ only depends on the store, we have that $([\update{x}:=e]b)\equiv  b$.

Predicting whether ${(e'\hookrightarrow e'')}$ holds after $\update{x}:=e$, we only
need to make a distinction between whether $x$ and $e'$ are aliases, that is, whether they denote
the same location, which is simply expressed by $x=e'$.
If $x=e'$ then $e''=e$ should hold, otherwise
$(e'\hookrightarrow e'')$ (note again, that $\update{x}:=e$ does not
affect the values of the expressions $e,e'$ and $e''$).
As a basic example, we compute
$$
\begin{array}{lll}
[\update{x}:=e](y\hookrightarrow -) &\equiv & (\mbox{definition }
y\hookrightarrow -)\\

[\update{x}:=e]\exists z (y\hookrightarrow z) &\equiv &(\mbox{\bf E3})\\
\exists z [\update{x}:=e](y\hookrightarrow z) &\equiv &(\mbox{\bf E10})\\
\exists z((y=x\wedge e=z)\vee (y\not=x\wedge (y\hookrightarrow z)))
&\equiv & (\mbox{semantics SL})\\
y\not=x\to (y\hookrightarrow -)
\end{array}
$$
We use this derived equivalence in the following example:
$$
\begin{array}{lll}
[\update{x}:=e](y\mapsto-) &\equiv & (\mbox{definition $y\mapsto-$})\\

[\update{x}:=e]((y\hookrightarrow -) \wedge 
\forall z ( (z\hookrightarrow -)\to z=y))&\equiv &
(\mbox{\bf E2, E3, E9})\\

[\update{x}:=e](y\hookrightarrow -) \wedge 
\forall z  ([\update{x}:=e](z\hookrightarrow -)\to z=y)
&\equiv & (\mbox{see above})\\
(y\not=x\to (y\hookrightarrow -))\wedge
\forall z ((z\not=x\to (z\hookrightarrow -))\to z=y)  &\equiv &
(\mbox{semantics SL})\\
y=x\wedge ( {\bf emp}\vee (x\mapsto -))
\end{array}
$$

Predicting whether $(p\sep q)$ holds after the heap update $\update{x}:=e$, we need to distinguish
between whether $p$ or $q$ holds for the sub-heap that contains the (updated) location $x$.
Since we do not assume that $x$ is already allocated, we instead distinguish between  
whether $p$ or $q$ holds initially for the sub-heap that  does \emph{not} contain the updated location $x$.
As a simple example, we compute

$$
\begin{array}{lll}
[\update{x}:=e](\T\sep (x\mapsto-)) &\equiv & (\mbox{\bf E9,E11})\\
(\T \sep ((x\mapsto-)\wedge (x\not\hookrightarrow-)))
\vee ((x\not\hookrightarrow-)\sep [\update{x}:=e](x\mapsto-) &\equiv & (\mbox{see above})\\
(\T \sep ((x\mapsto-)\wedge (x\not\hookrightarrow-)))
\vee ((x\not\hookrightarrow-)\sep ( {\bf emp}\vee (x\mapsto -))) &\equiv & (\mbox{semantics SL}) \\
(\T\sep\F) \vee  ((x\not\hookrightarrow-)\sep ( {\bf emp}\vee (x\mapsto -))) &\equiv & (\mbox{semantics SL}) \\
\T
\end{array}
$$
Note that this coincides with the above
calculation of $[\update{x}:=e](y\hookrightarrow -)$,
which also reduces to $\T$, instantiating $y$ by $x$.

The semantics of $(p\sepimp q)$ after the heap update $\update{x}:=e$ involves universal quantification
over all disjoint heaps that  do not contain $x$ (because after the heap update $x$ is allocated).
Therefore we simply add the condition that $x$ is not allocated to $p$, and apply the heap update
to $q$.
As a very basic example, we compute
$$
\begin{array}{lll}
[\update{x}:=0]((y\hookrightarrow 1)\sepimp (y\hookrightarrow 1)) &\equiv &
(\mbox{\bf E12})\\
((y\mapsto 1)\wedge (x\not\hookrightarrow -))\sepimp [\update{x}:=0](y\hookrightarrow 1)) &\equiv &
(\mbox{\bf E10})\\
((y\mapsto 1)\wedge (x\not\hookrightarrow -))\sepimp 
((y=x\wedge 0=1)\vee (y\not=x\wedge y\hookrightarrow 1))&\equiv &
(\mbox{semantics SL})\\
\T
\end{array}
$$
Note that $(y\hookrightarrow 1)\sepimp (y\hookrightarrow 1)\equiv \T$ and $[\update{x}:=0]\T\equiv \T$.

\vspace{4pt}\noindent {\bf Proof of Lemma~\ref{def:update}.}
\begin{description}
\item[\mbox{\bf E9}] $h,s\models [\update{x}:=e]b$ \\
iff (semantics  heap update modality)\\
$h[s(x):=s(e)],s\models b$\\
iff ($b$ does not depend on the heap)\\
$h,s\models b$



\item[\mbox{\bf E10}] $h,s \models [\update{x}:=e](e'\hookrightarrow e'')$\\
iff (semantics  heap update modality)\\
$h[s(x):=s(e)],s\models e'\hookrightarrow e''$\\
iff (semantics points-to)\\
$h[s(x):=s(e)](s(e')) = s(e'')$\\
iff (definition $h[s(x):=s(e)]$)\\
if $s(x)=s(e')$ then  $s(e)=s(e'')$ else $h(s(e'))=s(e'')$\\
iff (semantics assertions)\\
$h,s  \models  (x=e' \wedge e''=e) \vee (x\not=e'\wedge e'\hookrightarrow e'')$
\item[\mbox{\bf E11}] 
$h,s  \models [\update{x}:=e](p*q)$\\
iff (semantics  heap update modality)\\
$h[s(x):=s(e)],s\models p\sep q$.\\
From here we proceed as follows.
By the semantics of separating conjunction, there exist
$h_1$ and $h_2$  such that $h[s(x):=s(e)]=h_1\uplus h_2$,
$h_1,s \models p$, and 
$h_2,s  \models q$. Let $s(x)\in {\it dom}(h_1)$
(the other case runs similarly).
So $h[s(x):=s(e)]=h_1\uplus h_2$ implies $h_1(s(x))=s(e)$
and $h=h_1[s(x):=h(x)]\uplus h_2$,
By the semantics of the heap update modality, $h_1(s(x))=s(e)$
and $h_1,s \models p$ implies
$h_1[s(x):=h(x)],s\models [\update{x}:=e]p$.
Since $s(x)\not\in {\it dom}(h_2)$, we have
$h_2,s  \models q \wedge x\not\hookrightarrow-$.
By the semantics of separation conjunction
we conclude that $h,s\models [\update{x}:=e]p \sep q'$
($q'$ denotes $q \wedge x\not\hookrightarrow-$).

In the other direction, from 
$h,s\models [\update{x}:=e]p \sep q'$ (the other case runs similarly) we derive
that there exist
$h_1$ and $h_2$  such that $h=h_1\uplus h_2$,
$h_1,s \models [\update{x}:=e] p$ and
$h_2,s  \models q'$.
By the semantics of the heap update modality it follows that
$h_1[s(x):=s(e)],s \models p$.
Since $s(x)\not\in{\it dom}(h_2)$, we have that
$h[s(x):=s(e)]=h_1[s(x):=s(e)]\uplus h_2$,
and so $h[s(x):=s(e)],s\models p\sep q$, that is,
$h,s\models [\update{x}:=e] (p\sep q)$.
\item[\mbox{\bf E12}] $h,s  \models [\update{x}:=e](p \sepimp q)$\\
iff (semantics of heap update modality)\\
$h[s(x):=s(e)],s  \models p \sepimp{} q$\\
iff (semantics separating implication)\\
for every $h'$ disjoint from $h[s(x):=s(e)]$:
if $h',s  \models p$
then $h[s(x):=s(e)]\uplus h',s  \models q$\\
iff (since $s(x)\not\in{\it dom}(h')$)\\
for every $h'$ disjoint from $h$:
if $h',s  \models p\wedge x\not\hookrightarrow-$
then $(h\uplus h')[s(x):=s(e)],s  \models q$\\
iff (semantics of heap update modality)\\
for every $h'$ disjoint from $h$:
if $h',s  \models p\wedge x\not\hookrightarrow-$
then $h\uplus h',s  \models [s(x):=s(e)]q$\\
iff (semantics separating implication)\\
$h,s \models (p \wedge x\not\hookrightarrow-) \sepimp [\update{x}:=e]q$.
\hfill$\square$
\end{description}




The equivalences for the heap clear modality 
in the following lemma can be informally explained as follows:
Since  $\dispose{x}$ does not affect the store, and the evaluation of a Boolean condition
$b$ only depends on the store, we have that $[\dispose{x}]b = b$.
For $e\hookrightarrow  e'$ to hold after executing $\dispose{x}$, we must initially have that $x\not=e$
and $e\hookrightarrow  e'$.
As a simple example, we have that
$\forall y,z (y\not\hookrightarrow z)$ characterizes the empty heap.
It follows that
$[\dispose{x}](\forall y,z ( y\not\hookrightarrow z))$ is equivalent to
$\forall y,z (\neg (y\not= x \wedge y\hookrightarrow z))$.
The latter first-order formula
is equivalent to
$\forall y,z (y=x \vee  y\not\hookrightarrow z)$.
This assertion thus states
that the domain  consists at most of the location $x$,
which indeed ensures that after  $\dispose{x}$
the heap is empty.
To ensure that $p\sep q$ holds after clearing $x$ it suffices
to show that the initial heap can be split such that
both $p$ and $q$ hold in their respective sub-heaps
with $x$ cleared.
The semantics of $p\sepimp q$ after clearing $x$ involves universal quantification
over all disjoint heaps that  do may contain $x$, whereas before executing $\dispose{x}$ it
involves universal quantification
over all disjoint heaps that  do \emph{not} contain $x$, in case $x$ is  allocated initially.
To formalize  in the initial configuration   universal quantification
over all disjoint heaps we distinguish between all disjoint heaps that do not contain $x$
and \emph{simulate} all disjoint heaps
that   contain $x$ by interpreting  both $p$ and $q$ in $p\sepimp q$
in the context of heap updates $\update{x}:=y$
with \emph{arbitrary} values $y$ for the location $x$.
As a very basic example, consider
$[\dispose{x}]((x\hookrightarrow 0)\sepimp (x\hookrightarrow 0))$,
which should be equivalent to $\T$.
The left conjunct 
$((x\hookrightarrow 0)\wedge (x\not\hookrightarrow -))\sepimp [\dispose{x}](x\hookrightarrow 0))$
of the resulting formula after applying \mbox{\bf E16} 
is equivalent to $\T$ (because $(x\hookrightarrow 0)\wedge (x\not\hookrightarrow -)$
is equivalent to $\F$).
We compute the second conjunct (in the application of \mbox{\bf E10}
we omitted  some trivial reasoning steps):
$$
\begin{array}{lll}
\forall y ([\update{x}:=y](x\hookrightarrow 0)\sepimp [\update{x}:=y](x\hookrightarrow 0) &\equiv & (\mbox{\bf E10})\\
\forall y (y=0\sepimp y=0) &\equiv & (\mbox{semantics SL})\\
\T
\end{array}
$$

\begin{lemma}[Heap clear]\label{def:dispose-subst-CSL}
We have the following equivalences for the heap clear modality.
\begin{align*}
[\dispose{x}]b &\equiv  b, \tag{\mbox{\bf E13}}\\
[\dispose{x}](e\hookrightarrow e') &\equiv  (x\not= e)\wedge (e\hookrightarrow e'), \tag{\mbox{\bf E14}}\\
[\dispose{x}](p*q) &\equiv [\dispose{x}]p \sep [\dispose{x}]q, \tag{\mbox{\bf E15}}\\
[\dispose{x}](p \sepimp q) &\equiv ((p\wedge x\not\hookrightarrow -) \sepimp [\dispose{x}]q)\wedge \forall y ([\update{x}:=y]p\sepimp{} [\update{x}:=y]q), \tag{\mbox{\bf E16}}
\end{align*}
where $y$ is fresh.
\end{lemma}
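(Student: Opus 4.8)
\vspace{4pt}\noindent\textbf{Proof proposal for Lemma~\ref{def:dispose-subst-CSL}.}

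The plan is to prove each of \textbf{E13}--\textbf{E16} by unfolding the semantics of the heap clear modality. Since $\dispose{x}$ has the single transition to $(h[s(x):=\bot],s)$ and never fails, $h,s\models[\dispose{x}]p$ iff $h[s(x):=\bot],s\models p$. Equivalences \textbf{E13} and \textbf{E14} are then the heap-clear analogues of \textbf{E9} and \textbf{E10}: a Boolean expression does not depend on the heap, and for the weak points-to predicate we combine $\mbox{\it dom}(h[s(x):=\bot])=\mbox{\it dom}(h)\setminus\{s(x)\}$ with the fact that $h[s(x):=\bot]$ agrees with $h$ off $s(x)$, which gives precisely $(x\not= e)\wedge(e\hookrightarrow e')$.

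For \textbf{E15} I would transport a partition of $h[s(x):=\bot]$ to one of $h$ and back. From left to right: given $h[s(x):=\bot]=h_1\uplus h_2$ with $h_1,s\models p$ and $h_2,s\models q$ (note $s(x)$ is in neither domain), if $s(x)\in\mbox{\it dom}(h)$ I put it back into the left part, i.e.\ replace $h_1$ by $h_1[s(x):=h(s(x))]$, obtaining $h=h_1'\uplus h_2'$ with $h_i'[s(x):=\bot]=h_i$, hence $h_1',s\models[\dispose{x}]p$ and $h_2',s\models[\dispose{x}]q$; if $s(x)\notin\mbox{\it dom}(h)$ the original partition already works. Conversely, from $h=h_1\uplus h_2$ one checks $h[s(x):=\bot]=h_1[s(x):=\bot]\uplus h_2[s(x):=\bot]$ and pushes the modality into each conjunct. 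This direction is routine.

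The substantial case is \textbf{E16}. Unfolding gives: $h,s\models[\dispose{x}](p\sepimp q)$ iff for every $h'$ disjoint from $h[s(x):=\bot]$, $h',s\models p$ implies $h[s(x):=\bot]\uplus h',s\models q$. The crucial point is that disjointness from $h[s(x):=\bot]$ \emph{permits} $s(x)\in\mbox{\it dom}(h')$, so I split the quantification over $h'$ into two classes. Class (A): $s(x)\notin\mbox{\it dom}(h')$; here disjointness from $h[s(x):=\bot]$ is equivalent to disjointness from $h$, and $h[s(x):=\bot]\uplus h'=(h\uplus h')[s(x):=\bot]$, so the conclusion becomes $h\uplus h',s\models[\dispose{x}]q$; since $h',s\models x\not\hookrightarrow-$ is equivalent to $s(x)\notin\mbox{\it dom}(h')$, class (A) is exactly the conjunct $((p\wedge x\not\hookrightarrow-)\sepimp[\dispose{x}]q)$. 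Class (B): $s(x)\in\mbox{\it dom}(h')$; writing $h'=h''[s(x):=v]$ with $v=h'(s(x))$, $h''=h'[s(x):=\bot]$ (so $h''$ disjoint from $h$ and $s(x)\notin\mbox{\it dom}(h'')$), one computes $h[s(x):=\bot]\uplus h''[s(x):=v]=h[s(x):=v]\uplus h''$; reading the fresh variable $y$ as the value $v$ and using that $y$ does not occur in $p$ or $q$, together with the semantics of $\update{x}:=y$, this turns class (B) into $\forall y([\update{x}:=y]p\sepimp[\update{x}:=y]q)$.

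One gap remains to be closed in the last step: the conjunct $\forall y([\update{x}:=y]p\sepimp[\update{x}:=y]q)$ quantifies over \emph{all} $h''$ disjoint from $h$, including ones with $s(x)\in\mbox{\it dom}(h'')$ (possible only when $s(x)\notin\mbox{\it dom}(h)$), whereas class (B) produced only $h''$ with $s(x)\notin\mbox{\it dom}(h'')$. This is harmless: the update $\update{x}:=y$ overwrites $s(x)$, so the instance of the implication for such an $h''$ coincides with the instance for $h''[s(x):=\bot]$, and no new constraint is added. Combining classes (A) and (B) then yields the right-hand side of \textbf{E16}. I expect this case analysis to be the only delicate part --- in particular, keeping track of whether $s(x)$ lies in $h$, in $h'$, or in neither, and verifying the heap identities $h[s(x):=\bot]\uplus h'=(h\uplus h')[s(x):=\bot]$ and $(h\uplus h'')[s(x):=v]=h[s(x):=v]\uplus h''$ under the relevant disjointness hypotheses; everything else is unfolding definitions, exactly as in the proof of Lemma~\ref{def:update}.
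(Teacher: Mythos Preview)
Your proposal is correct and follows essentially the same approach as the paper: unfold the semantics of the heap clear modality, dispatch \textbf{E13}--\textbf{E14} as the heap-clear analogues of \textbf{E9}--\textbf{E10}, prove \textbf{E15} by transporting partitions across the clearing (the paper records exactly your two observations as a ``Note''), and for \textbf{E16} split on whether $s(x)$ lies in the domain of the extending heap. The only presentational difference is that the paper argues the two directions of \textbf{E16} separately (right-to-left with your Class~(A)/(B) case split, then left-to-right by establishing each conjunct), whereas you identify each class directly with a conjunct and then close the quantifier-range ``gap''; your gap argument is precisely what the paper's left-to-right proof of the second conjunct does when it allows an arbitrary $h'$ disjoint from $h$ and overwrites at $s(x)$.
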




\begin{proof}
Here we go.
\begin{description}
\item[\mbox{\bf E13}]
$[\dispose{x}]b\equiv  b$.  As above, it suffices to observe that the evaluation of $b$ does not depend on the heap.
\item[\mbox{\bf E14}]
$h,s  \models [\dispose{x}](e\hookrightarrow e')$\\
iff (semantics heap clear modality)\\
$h[\dispose{s(x)}],s  \models e\hookrightarrow e'$\\
iff (semantics points-to)\\
$s(e)\in{\it dom}(h[\dispose{s(x)}])$ and $h[\dispose{s(x)}](s(e))=h(s(e))=s(e')$\\
iff (semantics assertions)\\
$h,s\models  x\not= e\wedge e\hookrightarrow e'$

\item[\mbox{\bf E15}]
$h,s  \models [\dispose{x}](p*q)$\\
iff (semantics heap clear modality)\\
$h[\dispose{s(x)}],s\models p\sep q$\\
iff (semantics separating conjunction)\\
$h_1,s\models p$ and $h_2,s\models q$,
for some $h_1,h_2$ such that $h[\dispose{s(x)}]=h_1\uplus h_2$\\
iff (semantics heap clear modality)\\
$h_1,s\models [\dispose{x}]p$ and $h_2,s\models [\dispose{x}]q$,
for some $h_1,h_2$ such that $h=h_1\uplus h_2$.\\
Note: $h=h_1\uplus h_2$ implies $h[\dispose{s(x)}]=h_1[\dispose{s(x)}]\uplus h_2[\dispose{s(x)}]$, and, conversely, $h[\dispose{s(x)}]=h_1 \uplus h_2$ implies there exists $h'_1,h'_2$ such that $h=h'_1\uplus h'_2$ and $h_1=h'_1[\dispose{s(x)}]$ and $h_2=h'_2[\dispose{s(x)}]$.

\item[\mbox{\bf E16}]
$h,s  \models [\dispose{x}](p \sepimp{} q)$\\
\mbox{iff} (semantics heap clear modality)\\
$h[s(x):=\perp],s  \models p \sepimp{} q$.\\
From here we proceed as follows.
First we show that 
$h,s \models  ((p \wedge x\not\hookrightarrow -)\sepimp{} [\dispose{x}]q)$
and
$h,s\models {\forall y ([\update{x}:=y]p\sepimp{} [\update{x}:=y]q)}$
implies $h[s(x):=\perp],s  \models p \sepimp{} q$.
Let $h'$ be disjoint from $h[s(x):=\perp]$
and $h',s\models p$. We have to show  that
$h[s(x):=\perp]\uplus h',s\models q$.
We distinguish the following two cases.

\begin{itemize}
    \item First, let  $s(x)\in {\it dom}(h')$.
We then introduce ${s'=s[y:=h'(s(x))]}$.
We have $h',s'\models p$ (since $y$ does not occur in $p$), so it follows by the semantics of the heap update modality
that $h'[s(x):=\bot],s'\models [\update{x}:=y]p$.
Since $h'[s(x):=\bot]$ and $h$ are disjoint
(which clearly follows from that $h'$ and  $h[s(x):=\bot]$
are disjoint), 
and since $h,s'\models [\update{x}:=y]p\sepimp{} [\update{x}:=y]q$,
we have that $h\uplus (h'[s(x):=\bot]), s'\models [\update{x}:=y]q$.
Applying again the semantics of the heap update modality,
we obtain 
$(h\uplus (h'[s(x):=\bot]))[s(x):=s'(y)], s'\models q$.
We then can conclude this case observing  that $y$ does not occur in $q$ and that
$h[s(x):=\bot]\uplus h'=(h\uplus (h'[s(x):=\bot]))[s(x):=s'(y)]$.

    \item Next, let $s(x)\not\in {\it dom}(h')$.
So $h'$ and $h$ are disjoint, and thus
(since $h,s \models  (p \wedge x\not\hookrightarrow -) \sepimp{} [\dispose{x}]q$) we have
$h\uplus h',s\models [\dispose{x}]q$.
From which we derive $(h\uplus h')[s(x):=\bot],s\models q$
by the induction hypothesis.
We then can conclude this case by the observation that
${h[s(x):=\bot]\uplus h'}={(h\uplus h')[s(x):=\bot]}$.
\end{itemize}

Conversely, assuming 
$h[s(x):=\bot],s  \models p \sepimp{} q,$
we first show that
$h,s \models  (p\wedge x\not\hookrightarrow -) \sepimp{} [\dispose{x}] q$
and then
$h,s\models \forall y
([\update{x}:=y]p\sepimp{} [\update{x}:=y]q).$
\begin{itemize}
    \item Let $h'$ be disjoint from $h$ and $h',s\models p \wedge x\not\hookrightarrow -$.
We have to show that $h\uplus h',s\models [\dispose{x}]q$, that is,
$(h\uplus h')[s(x):=\perp],s\models q$ (by the semantics of the heap clear update).
Clearly, $h[s(x):=\bot]$ and $h'$ are disjoint,
and so  $h[s(x):=\bot]\uplus h',s\models q$ follows from our assumption.
We then can conclude this case by the observation that
$(h\uplus h')[s(x):=\bot]=h[s(x):=\bot]\uplus h'$,
because $s(x)\not\in {\it dom}(h')$.

    \item Let $h'$ be disjoint
from $h$ and $s'=s[y:=n]$,
for some $n$ such that $h',s'\models [\update{x}:=y]p$.
We have to show that
$h\uplus h',s'\models [\update{x}:=y]q$.
By the semantics of the heap update modality it follows that
$h'[s(x):=n],s'\models p$, that is,
$h'[s(x):=n],s\models p$ (since $y$ does not occur in $p$).
Since $h'[s(x):=n]$ and  $h[s(x):=\bot]$ are disjoint,
we derive from the assumption $h[s(x):=\bot],s  \models p \sepimp{} q$
that $h[s(x):=\bot]\uplus h'[s(x):=n],s\models q$.
Again by the semantics of the heap update modality we have
that $h\uplus h',s'\models [\update{x}:=y]q$ iff
$(h\uplus h')[s(x):=n],s'\models q$ 
(that is, $(h\uplus h')[s(x):=n],s\models q$, because
$y$ does not occur in $q$).
We then can conclude this case by the observation that
${(h\uplus h')[s(x):=n]}={h[s(x):=\bot]\uplus h'[s(x):=n]}$.
\end{itemize}
\end{description}
\end{proof}

We denote by $\mbox{\bf E}$ the \emph{rewrite system}
obtained from the equivalences $\mbox{\bf E1-16}$
by orienting these equivalences from left to right, e.g.,
equivalence \mbox{\bf  E1} is turned into a rewrite
rule $[S]\F\Rightarrow \F$.
The following theorem states that the rewrite system
\mbox{\bf E} is complete, that is, confluent and strongly normalizing.
Its proof is straightforward (using standard techniques) and therefore omitted.

\begin{theorem}[Completeness of \mbox{\bf  E}]$\;\;\;$\label{thm:completeness}
\begin{itemize}
\item {\bf Normal form.}
Every standard formula $p$ of SL is in normal form
(which means that it cannot be reduced by the rewrite system
$\mbox{\bf E}$).
\item {\bf Local confluence.}
For any two reductions $p\Rightarrow q_1$ and $p\Rightarrow q_2$
($p$ a formula of DSL)
there exists a DSL formula $q$ such that
$q_1\Rightarrow q$ and $q_2\Rightarrow q$.
\item {\bf Termination.}
There does not exist an infinite chain of reductions
$p_1\Rightarrow p_2\Rightarrow p_3 \cdots$.
\end{itemize}
\end{theorem}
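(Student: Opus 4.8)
The plan is to prove the three items in turn; \emph{normal form} and \emph{local confluence} are obtained by inspecting the rules together with the critical-pair lemma, and \emph{termination} by exhibiting a monotone $\Nat$-valued interpretation (confluence then follows from the last two by Newman's lemma). For the normal-form claim, note that every rewrite rule of $\mbox{\bf E}$ --- the equivalences $\mbox{\bf E1}$--$\mbox{\bf E16}$ oriented left to right, together with the pseudo-instruction instances of $\mbox{\bf E1}$--$\mbox{\bf E3}$ --- has a left-hand side whose redex is headed by a modality $[S]$; since a standard SL formula contains no modality, it contains no redex and is already in normal form.

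For local confluence I would appeal to the critical-pair lemma, so it suffices to show that all critical pairs are joinable. Each left-hand side has the form $[S]\phi$ where $\phi$ is either a plain variable ($\mbox{\bf E5}$--$\mbox{\bf E8}$) or has a fixed top-level shape ($\F$; a Boolean expression $b$; $(e'\hookrightarrow e'')$; one of $(p\to q)$, $(p\sep q)$, $(p\sepimp q)$; or $\forall y\,p$). No left-hand side unifies with a \emph{proper} non-variable subterm of any left-hand side --- those subterms are expressions or are headed by a logical or separating connective, never by a modality --- so all overlaps occur at the root; and at the root two \emph{distinct} rules share an instance only when $\phi$ is $\F$ (so that $\mbox{\bf E1}$ and one of $\mbox{\bf E4}$/$\mbox{\bf E9}$/$\mbox{\bf E13}$ both apply), in which case both reducts equal $\F$. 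Hence every critical pair is trivial, and $\mbox{\bf E}$ is locally confluent.

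For termination I would define a weight $\mu(p)\in\{1,2,\ldots\}$ on DSL formulas written in the primitive syntax --- first unfolding the abbreviations $\lnot,\land,\lor,\exists,\mapsto,(e\nothookrightarrow-),\ldots$ --- by structural recursion: $\mu(b)=\mu(e\hookrightarrow e')=1$; $\mu(p\to q)=\mu(p\sep q)=\mu(p\sepimp q)=\mu(p)+\mu(q)+1$; $\mu(\forall x\,p)=\mu(p)+1$; and $\mu([S]p)=f_S(\mu(p))$ for a strictly increasing $f_S\colon\Nat\to\Nat$ chosen per instruction type. Since every constructor is strictly monotone in its arguments, it is enough to verify $\mu(l\sigma)>\mu(r\sigma)$ for every rule $l\Rightarrow r$ and substitution $\sigma$; and as $\mu$ of any formula is at least $1$, this reduces to a handful of numeric inequalities in one or two positive parameters. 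The $f_S$ are then chosen along the acyclic dependency between instruction types that the right-hand sides induce: a sufficiently fast-growing $f_S$ (a high-base exponential $f_S(n)=B^n$ works uniformly) for the simple assignment $x:=e$ and for the two pseudo-instructions $\update{x}:=e$ and $\dispose{x}$; and, for the look-up, mutation, allocation and de-allocation modalities, the relevant one of these plus a large enough additive constant, since $\mbox{\bf E5}$--$\mbox{\bf E8}$ rewrite $[S]p$ to modality-free scaffolding containing one occurrence of a simple-assignment or pseudo-instruction modality applied to the same $p$.

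The main obstacle is precisely the choice of $f_S$ for the simple assignment and the pseudo-instructions. The rules $\mbox{\bf E2}$ (for $\circ=\sep$), $\mbox{\bf E11}$, $\mbox{\bf E15}$ and $\mbox{\bf E16}$ \emph{duplicate} subformulas, so the required inequality takes the shape $f_S(a+b+1)>f_S(a)+f_S(b)+(\text{a term linear in }a,b\text{ plus a constant})$; this fails for low-degree polynomials against a large enough additive constant, and one needs $f_S$ to grow fast enough that the gain $f_S(a+b+1)-f_S(a)-f_S(b)$ dominates a fixed linear-plus-constant term for all $a,b\ge1$ --- which an exponential achieves. Everything else is routine: the base cases $\mbox{\bf E4}$, $\mbox{\bf E9}$, $\mbox{\bf E10}$, $\mbox{\bf E13}$, $\mbox{\bf E14}$; the logical-connective rules $\mbox{\bf E1}$--$\mbox{\bf E3}$ in both their simple-assignment and pseudo-instruction forms; the bookkeeping of the constants produced by unfolding the abbreviations; and the freshness side-conditions of $\mbox{\bf E3}$, $\mbox{\bf E5}$ and $\mbox{\bf E16}$, handled by the standard conventions for rewriting modulo $\alpha$-equivalence.
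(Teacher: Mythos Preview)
Your proposal is correct and is exactly the kind of argument the paper gestures at: the paper omits the proof entirely, saying only that it ``is straightforward (using standard techniques),'' and the standard techniques in question are precisely the critical-pair lemma for local confluence and a monotone interpretation for termination. Your identification of the single nontrivial root overlap (\mbox{\bf E1} versus the Boolean base cases \mbox{\bf E4}/\mbox{\bf E9}/\mbox{\bf E13}, both yielding $\F$) is accurate, and your termination measure---a high-base exponential $f_S(n)=B^n$ for simple assignment and the two pseudo-instructions, plus an additive constant for \mbox{\bf E5}--\mbox{\bf E8}---handles the duplicating rules \mbox{\bf E11}, \mbox{\bf E15}, \mbox{\bf E16} and the cross-dependency of heap clear on heap update in \mbox{\bf E16}.
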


We now show an example of the interplay between the modalities for heap update and heap clear.
We want to derive
\[
\HT{\forall x((x\not\hookrightarrow-) \to p)}{x := \new(0); \delete(x)}{p}
\]
where statement $x := \new(0); \delete(x)$ simulates the so-called random assignment \cite{Harel79}: the program terminates with a value of $x$ that is chosen non-deterministically.
First we apply the axiom \mbox{\bf E8} for de-allocation to obtain
\[
\HT{(x\hookrightarrow -)\land [\dispose{x}]p}{\delete(x)}{p}.
\]
Next, we apply the axiom \mbox{\bf E8} for allocation to obtain 
\[
\begin{array}{c}
\{\forall x((x\nothookrightarrow -)\to [\update{x} := 0]((x\hookrightarrow -)\land [\dispose{x}]p))\}\\
{x := \new(0)}\\
\{(x\hookrightarrow -)\land p[\dispose{x}]\}.
\end{array}
\]
Applying  \mbox{\bf E10} (after pushing the heap update modality inside), 
followed by some basic first-order reasoning,
we can reduce $[\update{x} := 0](\exists y(x\hookrightarrow y))$ 
to true.
So we obtain
\[
\begin{array}{c}
\{\forall x((x\nothookrightarrow -)\to [\update{x} := 0][\dispose{x}]p)\}\\
{x := \new(0)}\\
\{(x\hookrightarrow -)\land p[\dispose{x}]\}.
\end{array}
\]
In order to proceed we formalize the interplay between the modalities for heap update and heap clear by the following general equivalence:
$$
[\update{x} := e][\dispose{x}]p \equiv [\dispose{x}]p$$
We then complete the proof by
applying the sequential composition rule and consequence rule, using the above
equivalence  and the following axiomatization of the heap clear modality:
$$
(x\not\hookrightarrow -)\wedge [\dispose{x}]p \equiv 
(x\not\hookrightarrow -)\wedge p$$

The above axiomatization can be extended in the standard manner
to a program logic for sequential while programs, see \cite{Harel79}, which does not require the frame rule,
nor any other adaptation rule besides the consequence rule.
For recursive programs however one does need more adaptation rules:
a further discussion about the use of the frame rule
in a completeness proof for recursive programs is outside 
the scope of this paper.

\section{Expressiveness DSL}
In this section, we illustrate the expressiveness of 
DSL
in a  completeness proof of the local mutation axiom
and a novel strongest postcondition axiomatization.

\subsection{Completeness local axioms}
We consider the completeness of the following local mutation axiom
(completeness of the local axioms for the other standard basic instructions have already been established, as observed in the Introduction)
$$
    \HT{x\mapsto -}{[x]:=e}{x\mapsto e}
$$
The proof itself does not make use of the separating implication.

\begin{theorem}[Completeness local mutation axiom]
   If $\models \HT{p}{[x]:=e}{q}$ then $\HT{p}{[x]:=e}{q}$ is derivable using the local mutation axiom, frame rule, and consequence rule.
\end{theorem}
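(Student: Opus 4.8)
\medskip
\noindent\emph{Proof plan.}
The plan is to derive $\HT{p}{[x]:=e}{q}$ from the local mutation axiom $\HT{x\mapsto -}{[x]:=e}{x\mapsto e}$ by an application of the frame rule followed by the consequence rule, choosing the frame so that separating implication is never introduced. First I observe that $[x]:=e$ modifies no program variable, so the frame rule carries no side condition and may be applied with \emph{any} assertion as frame. As frame I take $r := [\update{x}:=e]q$. By Theorem~\ref{thm:completeness} the rewrite system $\mathbf{E}$ turns $r$ into an equivalent standard SL formula, so $r$ is a legitimate frame; moreover, inspecting \mbox{\bf E9}--\mbox{\bf E12}, only \mbox{\bf E12} can leave a separating implication in the output, hence when $q$ contains no separating implication neither does the normal form of $r$. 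Applying the frame rule to the local mutation axiom with this $r$ yields
\[
\HT{(x\mapsto -)\sep r}{[x]:=e}{(x\mapsto e)\sep r}.
\]

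It then remains to discharge, via the consequence rule, the implications $\models p\to (x\mapsto -)\sep r$ and $\models (x\mapsto e)\sep r\to q$; both rest on the elementary fact that whenever $h=h_0\uplus h_1$ with $\mathit{dom}(h_0)=\{s(x)\}$ one has $h_1[s(x):=v]=h[s(x):=v]$ for every $v$. For the first implication, suppose $h,s\models p$; validity of the triple forbids failure of $[x]:=e$, hence $s(x)\in\mathit{dom}(h)$, so I split off the cell at $s(x)$ as $h=h_0\uplus h_1$ with $\mathit{dom}(h_0)=\{s(x)\}$. Then $h_0,s\models x\mapsto -$, and since $h_1[s(x):=s(e)]=h[s(x):=s(e)]$ is exactly the (non-failing) post-state of $[x]:=e$ started in $(h,s)$ it satisfies $q$, whence $h_1,s\models[\update{x}:=e]q$ by the semantics of the heap-update modality, i.e.\ $h_1,s\models r$, giving $h,s\models (x\mapsto -)\sep r$. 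For the second implication, suppose $h,s\models (x\mapsto e)\sep r$ and split $h=h_0\uplus h_1$ with $h_0,s\models x\mapsto e$ (so $\mathit{dom}(h_0)=\{s(x)\}$ and $h_0(s(x))=s(e)$) and $h_1,s\models r=[\update{x}:=e]q$, i.e.\ $h_1[s(x):=s(e)],s\models q$; since $h_1[s(x):=s(e)]=h[s(x):=s(e)]=h$, this gives $h,s\models q$. A final application of the consequence rule to these two implications and the framed triple yields $\HT{p}{[x]:=e}{q}$, obtained using only the local mutation axiom, the frame rule, and the consequence rule, and with no separating implication appearing anywhere beyond those already occurring in $p$ and $q$.

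The only step that requires any ingenuity is the choice of frame: the textbook completeness argument takes $r := (x\mapsto e)\sepimp q$, which manifestly relies on separating implication, whereas here the heap-update modality does the same work — $[\update{x}:=e]q$ describes precisely the heaps that, extended with a cell $x\mapsto e$, make $q$ true — and Theorem~\ref{thm:completeness} guarantees that this frame is (equivalent to) a genuine SL assertion, and a separating-implication-free one whenever $q$ is. Everything else is routine bookkeeping with $\uplus$ and the semantics of $\update{x}:=e$.
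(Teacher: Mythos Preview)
Your proof is correct but takes a different route from the paper's. The paper chooses the frame $r := \exists y\,([\update{x}:=y]p)$ for a fresh $y$, built from the \emph{precondition}, whereas you take $r := [\update{x}:=e]q$, built from the \emph{postcondition}. Both choices avoid introducing the separating implication (the paper's frame is $\sepimp$-free whenever $p$ is, yours whenever $q$ is), and the two arguments are essentially dual: in the paper the validity hypothesis $\models\HT{p}{[x]:=e}{q}$ is invoked only when verifying the \emph{second} implication $(x\mapsto e)\sep r\to q$ (a sub-heap witnessing $r$ yields a heap satisfying $p$, and then one runs the mutation to get $q$), while in your proof it is invoked only for the \emph{first} implication $p\to(x\mapsto -)\sep r$. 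Your frame is syntactically a bit simpler (no fresh existential quantifier), while the paper's frame has the minor conceptual advantage of being independent of both the right-hand side $e$ of the mutation and the postcondition $q$.
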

\begin{proof}
The problem here is how to compute a `frame' $r$ for a given valid
specification $\HT{p}{[x]:=e}{q}$
so that $p$ implies ${(x\mapsto -)\sep r}$ and
$(x\mapsto e)\sep r$ implies $q$.
We show here how the heap update modality can be used to describe such a frame.
Let $\models \HT{p}{[x]:=e}{q}$ and
$r$ denote $\exists y([\update{x}:=y]p)$ for some fresh $y$.
By the local axiom and the frame rule, we first derive
$$
\HT{(x\mapsto -)\sep r}{[x]:=e}{(x\mapsto e)\sep r}.
$$
Let $h,s\models p$. 
To prove that
$h,s\models {(x\mapsto -)\sep r}$, it suffices to show that
there exists a split  ${h=h_1\uplus h_2}$ such 
that  $h_1,s\models {(x\mapsto -)}$ and $h_2,s[y:=n]\models [\update{x}:=y]p$,
for some $n$.
Since $\models \HT{p}{[x]:=e}{q}$
we have that $s(x)\in{\it dom}(h)$.
So we can introduce the split ${h=h_1\uplus h_2}$ such 
that  $h_1,s\models {(x\mapsto -)}$ and ${h_2=h[s(x):=\bot]}$.
By the semantics of the heap update modality it then suffices to observe that
$h_2,s[y:=h(s(x))]\models [\update{x}:=y]p$ if and only if
$h_2[s(x):=h(s(x))],s \models p$ ($y$ does not appear in $p$),
that is, $h,s\models p$.

On the other hand, we have that
$(x\mapsto e)\sep r$ implies
$q$: Let $h,s\models (x\mapsto e)\sep r$. So there exists  a  split $h=h_1\uplus h_2$ such 
that  $h_1,s\models x\mapsto e$
and $h_2,s\models r$.
Let $n$ be such that
$h_2,s[y:=n]\models [\update{x}:=y]p$.
By the semantics of the heap update modality again we have that
${h_2,s[y:=n]}\models [\update{x}:=y]p$ if and only if
$h_2[s(x):=n],s\models p$ (here $y$ does not appear in $p$).
Since $\models \HT{p}{[x]:=e}{q}$ it then follows
that ${h_2[s(x):=s(e)],s}\models q$, that is,
$h,s\models q$ (note that $h=h_2[s(x):=s(e)]$ because
$h(s(x))=s(e)$ and $h_2=h[s(x):=\bot]$).
\end{proof}

\subsection{Strongest postcondition axiomatization}
Before we discuss a novel strongest postcondition axiomatization 
using the modalities of DSL, it should be noted that in general the 
semantics of program logics which require absence of certain failures gives rise
to an asymmetry between weakest preconditions and strongest postconditions:
For any  statement $S$ and postcondition $q$ we have that
$\models \HT{\F}{S}{q}$.
However, for any precondition $p$ which does not exclude failures,
there does not exist \emph{any} postcondition $q$ such
that $\models \HT{p}{S}{q}$.
We solve this by simply requiring that the given precondition does not give rise to failures (see below).

\begin{figure}[t]
\begin{framed}
\vspace*{-6pt}
$$
{\HT{p}{x:=e}{\exists y([x:=y]p\wedge (x=e[y/x]))}}{}
$$
$$
\!\!\!\!{\HT{p\wedge (e\hookrightarrow -)}{x:=[e]}{\exists y([x:=y]p\wedge (x\hookrightarrow e[y/x]))}}{}
$$
$$
{\HT{p\wedge (x\hookrightarrow -)}{[x]:=e}{\exists y([\update{x}:=y]p)\wedge (x\hookrightarrow e)}}{}
$$
$$
\!\!{\HT{p}{x:=\new(e)}{[\dispose{x}](\exists x p)\wedge (x \hookrightarrow e)}}{}
$$
$$
\!\!{\HT{p\wedge (x\hookrightarrow -)}{\delete(x)}{\exists y([\update{x}:=y]p)\wedge (x\not\hookrightarrow -)}}{}
$$
\vspace*{-15pt}
\end{framed}
\vspace{-1em}
\caption{Strongest postcondition axioms of separation logic (SP-DSL), where $y$ is fresh everywhere and $x$ does not occur in $e$ in case of $x:=\new(e)$.}
\label{fig:SP-CSL}
\end{figure}

Figure~\ref{fig:SP-CSL} contains our novel strongest postcondition axiomatization SP-DSL, where the main novelty is in the use of the
heap update and heap clear modalities in the axiomatization of
the mutation, allocation, and de-allocation instruction.
It is worthwhile to contrast, for example, the  use of
the heap clear modality
to express freshness in the
strongest postcondition axiomatization of the allocation instruction
with the following traditional axiom  (assuming that $x$ does not occur free
in $p$):
$$
\HT{p}{x:=\new(e)}{p\sep (x\mapsto e)}
$$
where freshness is enforced by the introduction
of the separating conjunction (which as such increases the complexity
of the postcondition).
More specifically, we have the following instance
of the allocation axiom in Figure \ref{fig:SP-CSL} (also making use of  that $x$
does not appear in the precondition)
$$
\HT{y\hookrightarrow 0}{x:=\new(1)}{
[\dispose{x}](y\hookrightarrow 0)\wedge (x \hookrightarrow 1)}
$$
Applying  \mbox{\bf E14} we obtain
$$
\HT{y\hookrightarrow 0}{x:=\new(1)}{y\not=x\wedge (y\hookrightarrow 0)
\wedge (x\hookrightarrow 1)}
$$
On the other hand, instantiating the  above traditional axiom we obtain
$$
\HT{y\hookrightarrow 0}{x:=\new(1)}{(y\hookrightarrow 0)\sep (x\mapsto 1)}
$$
which is implicit and needs
unraveling the semantics of separating conjunction.
Using the heap clear modality we thus obtain a basic
assertion in predicate logic which provides an explicit but simple
account of aliasing.


\begin{theorem}[Soundness and completeness SP-DSL]\label{th:SP-CSL}
For any basic instruction $S$,
we have $\models\HT{p}{S}{q}$ if and only if $\HT{p}{S}{q}$ is derivable from the axioms in SP-DSL (Figure \ref{fig:SP-CSL}) and (a single application of) the rule of consequence.
\end{theorem}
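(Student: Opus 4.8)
The plan is to prove soundness and completeness for each of the five basic instructions separately, since each axiom has its own shape. For every instruction $S$ I would argue the equivalence "$\models\{p\}\,S\,\{q\}$ iff $\models(\mathit{sp}_S(p))\to q$", where $\mathit{sp}_S(p)$ denotes the postcondition appearing in Figure~\ref{fig:SP-CSL}; the theorem then follows because the only adaptation available is a single application of the consequence rule, and strong partial correctness of $\{p\}\,S\,\{\mathit{sp}_S(p)\}$ together with $\models \mathit{sp}_S(p)\to q$ yields derivability, while conversely a derivation $\{p\}\,S\,\{q\}$ from the axiom plus one consequence step forces $\models \mathit{sp}_S(p)\to q$ and hence semantic validity of $\{p\}\,S\,\{q\}$. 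So the real content is: (i) $\{p\}\,S\,\{\mathit{sp}_S(p)\}$ is valid, i.e. $\mathit{sp}_S$ really is a postcondition and $p$ rules out failure where needed; and (ii) $\mathit{sp}_S(p)$ is the \emph{strongest} such, i.e. any $h',s'$ reachable from some $h,s\models p$ satisfies $\mathit{sp}_S(p)$.

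For each instruction the argument unwinds the operational semantics of Figure~\ref{fig:syntax} against the semantics of the postcondition in Figure~\ref{fig:ass}. For simple assignment $x:=e$: $h,s\models p$ and $s'=s[x:=s(e)]$; take $y$ fresh and note $s=s'[x:=s(x)]$, so by the substitution lemma ({\bf E4}, {\bf E1--3}) $h,s'[y:=s(x)]\models [x:=y]p$ and $s'(x)=s'(e[y/x])$, giving the existential witness; conversely any model of the postcondition back-substitutes to a model of $p$ whose assignment image is the given state. For look-up $x:=[e]$ the added conjunct $(e\hookrightarrow -)$ in the precondition is exactly what forbids $\mbox{\bf fail}$, and the rest mirrors simple assignment with $(x\hookrightarrow e[y/x])$ recording that the looked-up cell held the new value of $x$. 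For mutation $[x]:=e$: the precondition conjunct $(x\hookrightarrow -)$ rules out failure; writing $h'=h[s(x):=s(e)]$ we have $h=h'[s(x):=h(s(x))]$, so $h',s[y:=h(s(x))]\models[\update{x}:=y]p$ by the heap-update semantics, and $h',s\models (x\hookrightarrow e)$; the converse recovers $h$ from $h'$ and the witness value. For de-allocation $\delete(x)$ the shape is the same with $(x\not\hookrightarrow -)$ in place of $(x\hookrightarrow e)$, since clearing then reading gives $\bot$. For allocation $x:=\new(e)$: here $h'=h[n:=s(e)]$ with $n\notin\mathit{dom}(h)$ and $s'=s[x:=n]$; the key observation is $h=h'[n:=\bot]=h'[s'(x):=\bot]$ and, since $x$ is fresh in the precondition side after the $\exists x$, $h'[s'(x):=\bot],s\models \exists x\,p$ unwinds via the heap-clear semantics to $h',s'\models[\dispose{x}](\exists x\,p)$, while $h',s'\models(x\hookrightarrow e)$ because $h'(n)=s(e)=s'(e)$ using that $x$ does not occur in $e$. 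The converse direction, for the strongest-postcondition half, takes any $h',s'\models[\dispose{x}](\exists x\,p)\land(x\hookrightarrow e)$, clears $s'(x)$ to get a heap satisfying $\exists x\,p$, picks the witness for $x$, and checks this reconstructs a legitimate pre-state $h,s$ with $s'(x)\notin\mathit{dom}(h)$, so that $\langle x:=\new(e),h,s\rangle\Rightarrow(h',s')$ is a genuine transition.

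The main obstacle will be the allocation case: the interplay between the fresh location $n$ chosen non-deterministically by the semantics and the $[\dispose{x}](\exists x\,p)$ idiom, which must simultaneously (a) quantify away the \emph{value} of $x$ in the old store and (b) erase the \emph{cell} at the new address, without accidentally constraining the heap outside $\{n\}$. I would be careful to verify that $h'[s'(x):=\bot]=h$ exactly — this uses $n\notin\mathit{dom}(h)$ in both directions — and that in the completeness (strongest) direction the value reintroduced for $x$ and the point where the cell is re-erased can always be chosen to land on the same address $n$ with $n\notin\mathit{dom}(h)$, which is where the side condition "$x$ does not occur in $e$" and the freshness of $y$ (implicit in the $\exists x$ rebinding) are really needed. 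Once allocation is settled, the other four cases are routine semantic unwindings of the same pattern, and the reduction to "validity iff $\models\mathit{sp}_S(p)\to q$ iff derivable with one consequence step" is immediate from the definitions of validity of specifications and of the consequence rule.
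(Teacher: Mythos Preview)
Your proposal is correct and follows essentially the same route as the paper: establish that each axiom is a valid triple (soundness), and that its postcondition is strongest in the sense that every model of $\mathit{sp}_S(p)$ arises as a successor of a model of the axiom's precondition (completeness), with allocation singled out as the delicate case and handled exactly as the paper does via the heap-clear modality and the side condition that $x$ does not occur in $e$. One small slip: your summary of item~(ii) is stated backwards---``any $h',s'$ reachable from some $h,s\models p$ satisfies $\mathit{sp}_S(p)$'' is just validity again; what you need (and what your detailed converse arguments in fact carry out) is the reverse implication, that any $h',s'\models \mathit{sp}_S(p)$ is reachable from some $h,s$ satisfying the axiom's precondition.
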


\begin{proof}
We showcase the soundness and completeness of the strongest postcondition axiomatization of
allocation 
(soundness and completeness of the strongest postconditions
for the mutation and de-allocation instructions follow in a
straightforward manner from the semantics of the heap update
modality).
\begin{itemize}
\item $\models \HT{p}{x:=\new(e)}{
[\dispose{x}](\exists y([x:=y]p))
\wedge x \hookrightarrow e}$:\\
Let $h,s\models p$. We have to show that
${h[n:=s(e)],s[x:=n]}\models {[\dispose{x}](\exists y([x:=y]p))} \wedge {x \hookrightarrow e}$,
for $n\not\in {\it dom}(h)$.
By definition $h[n:=s(e)],s[x:=n]\models x \hookrightarrow e$.
By the semantics of the heap clear modality and existential quantification,
it then suffices to show that
$h[n:=\perp],s[x:=n][y:=s(x)]\models [x:=y]p$,
which by the semantics of the simple assignment modality boils down to
$h,s[y:=s(x)]\models p$ (note that $n\not\in {\it dom}(h)$, that is,
$h,s\models p$ ($y$ does not appear in $p$), which holds by assumption.

\item $\models\HT{p}{x:=\new(e)}{q}$ implies\\
$\models ([\update{x}:=\bot](\exists y(p[x:=y]))
\wedge x \hookrightarrow e)\to q$:\\
Let 
$h,s\models [\update{x}:=\bot](\exists y([x:=y]p)) \wedge x \hookrightarrow e$.
We have to show that $h,s\models q$.
By the semantics of the heap clear modality we derive from the above assumption that
$h[s(x):=\bot],s\models \exists y(p[x:=y])$.
Let ${h[s(x):=\bot],s[y:=n]} \models p[x:=y]$, for some $n$.
It follows from the semantics of the simple assignment modality that
${h[s(x):=\bot],s[x:=n]} \models p$ ($y$ does not appear in $p$).
Since  $s(x)\not\in{\it dom}(h[s(x):=\bot])$, 
we have that 
${\langle x:=\new(e),h[s(x):=\bot],s[x:=n]\rangle} \Rightarrow {(h[s(x):=s[x:=n](e)],s)}$.
Since we can assume without loss of generality
that $x$ does not occur in $e$ 
we have that
$s[x:=n](e)=s(e)$, and so from the assumption that  $h,s\models x \hookrightarrow e$ we derive that  $h[s(x):=s[x:=n](e)]=h$.
From $\HT{p}{x:=\new(e)}{q}$ then
we conclude that $h,s\models q$.
\end{itemize}
\end{proof}

\section{Extensions}\label{sec:extensions}

A straightforward extension concerns the general mutation instruction
$[e]:=e'$, which allows the use of an arbitrary arithmetic expression $e$ to denote the updated location.
We can simulate this by the statement $x:=e;\ [x]:=e'$, where $x$ is a fresh variable.
Applying the modalities we derive the following
axiom
$$
{\HT{(e\hookrightarrow-)\wedge [x:=e][\update{x}:=e']p}{[e]:=e'}{p}}{}
$$
where $x$ is a fresh variable.

Another straightforward extension concerns the allocation $x:=\new(e)$ in the case where $x$ does occur in $e$.
The instruction ${x:=\new(e)}$ can be simulated by ${y:=x;}\ {y:=\new(e[y/x])}$ where $y$ is a fresh variable.
Applying the sequential composition rule 
and the axiom for basic assignments,
it is straightforward 
to derive the following generalized backwards allocation axiom:
$$
{\HT{\forall y( (y\not\hookrightarrow-) \rightarrow [y:=x][\update{y}:=e[y/x]]p)}{x:=\new(e)}{p}}{}
$$
where $y$ is fresh.

Reynolds introduced in \cite{Reynolds02}  the allocation
instruction ${x:=\new(\bar{e})}$,
which allocates a consecutive part of the memory for
storing the values of $\bar{e}$: its semantics is described by
$$\langle x:={\bf cons}(\bar{e}),h,s\rangle \Rightarrow (h[\bar{m}:=s(\bar{e})],s[x:=m_1])$$ 
where $\bar{e}=e_1,\ldots,e_n$, $\bar{m}=m_1,\ldots,m_n$,
$m_{i+1}=m_i+1$, for $i=1,\ldots,n-1$,
$\{m_1,\ldots,m_n\}\cap \mbox{\it dom}(h)=\emptyset$,
and, finally, 
$$
h[\bar{m}:=s(\bar{e})](k)=
\left\{
\begin{array}{ll}
h(k) &\text{if }k\not\in \{m_1,\ldots,m_n\}\\
s(e_i)&\text{if }k=m_i\text{ for some }i=1,\ldots,n.
\end{array}
\right.
$$

Let $\bar{e}'$ denote a sequence of expressions
$e'_1,\ldots e'_n$ 
such that $e'_1$ denotes the variable $x$ and
$e'_{i}$ denotes the expression $x+(i-1)$, for $i=2,\ldots,n$.
The  storage of the values of $e_1,\ldots,e_n$ 
then  can be modeled by a sequence
of heap update modalities $[\update{e'_i}:=e_i]$,  for $i=1,\ldots,n$.
We abbreviate such a sequence  by
$[\update{\bar{e}'}:=\bar{e}]$.
Assuming that $x$ does not occur in one
of the expressions $\bar{e}$ (this restriction can be lifted
as described above),
we have the following generalization of the
above backwards allocation axiom
$$
{\HT{\forall x ( \Big(\bigwedge_{i=1}^n (e'_i\not\hookrightarrow-)\Big) \rightarrow [\update{\bar{e}'}:=\bar{e}]p)}{x:={\bf cons}(\bar{e})}{p}}{}
$$

\paragraph{Recursive predicates}

Next we illustrate the extension of our 
approach to recursive predicates for reasoning about a linked list.
Assuming a set of user-defined predicates $r(x_1,\ldots,x_n)$ of arity $n$,
we introduce corresponding basic assertions $r(e_1,\ldots,e_n)$
which are interpreted by (the least fixed point of) a system of
recursive predicate definitions
$r(x_1,\ldots,x_n) := p$, where the user-defined predicates
only occur positively in $p$.

If for any recursive definition
$r(x_1,\ldots,x_n) := p$
only the formal parameters $x_1,\ldots,x_n$ occur free in $p$,
we can simply define $[x := e]r(e_1,\ldots,e_n)$ by ${r(e_1[e/x],\ldots,e_n[e/x])}$.
However, allowing global variables in recursive
predicate definitions does affect the interpretation of these definitions.
As a very simple example, given $r(y) := x=1$,
clearly $\HT{r(y)}{x:=0}{r(y)}$ is invalid
(and so we cannot simply define  $[x:=0]r(y)$ by $r(y[0/x])$).
Furthermore, substituting the parameters of $r$ clearly does
not make sense for modalities with heap modifications (such as mutation, allocation, etc.):
as subformulas may depend on the heap, these may require alias analysis
\emph{in the definition of $r$}.


We illustrate how our dynamic logic works with recursively defined predicates
on a characteristic linked list example. In particular, let $r$ be 
the recursively defined \emph{reachability} predicate 
\[
r(x,y) := x=y \lor \exists z ((x\mapsto z) \sep  r(z,y)).
\]
We shall prove
$\HT{r(\mathit{first},y)}{\mathit{first}:=\new(\mathit{first})}{r(\mathit{first},y)}$.
To do so, we model $\mathit{first}:=\new(\mathit{first})$ by
$u:=\mathit{first};\ \mathit{first}:=\new(u)$,
for some fresh variable $u$.
Thus it is sufficient to show
\[
\HT{r(\mathit{first},y)}{u := \mathit{first};\ \mathit{first}:=\new(u)}{r(\mathit{first},y)}.
\]

We first calculate the weakest precondition of the last assignment: $[\mathit{first}:=\new(u)]r(\mathit{first},y)$. Using equivalence ({\bf E7}) of Lemma~\ref{lem:ABI} we obtain $\forall \mathit{first}((\mathit{first}\nothookrightarrow-)\to [\update{\mathit{first}}:=u]r(\mathit{first},y)$.

Next, we simplify the modal subformula $[\update{\mathit{first}}:=u]r(\mathit{first},y)$
we first unfold the definition of $r$, obtaining
$\mathit{first}=y \lor \exists z ((\mathit{first}\mapsto z) \sep r(z,y))$.
By Lemma~\ref{def:update} ({\bf E11}), $[\update{\mathit{first}}:=u](\mathit{first}\mapsto z \sep r(z,y))$ reduces to the disjunction of 
${(\mathit{first}\mapsto z\wedge \mathit{first}
\nothookrightarrow -) \sep 
[\update{\mathit{first}}:=u]r(z,y))}$
and ${[\update{\mathit{first}}:=u](\mathit{first}\mapsto z) \sep 
(r(z,y)\wedge \mathit{first}
\nothookrightarrow -)}$.
In the first disjunct, the left-hand side of the separating conjunction asserts that
$\mathit{first}$ is allocated (and points to $z$) and
that simultaneously $\mathit{first}$ is not allocated.
This clearly is false in every heap, so that whole disjunct
reduces to $\F$.
Simplifying the second disjunct (reducing the modality with equivalence ({\bf E10}) of Lemma~\ref{def:update}) and applying standard logical equivalences,
yields that the whole subformula is equivalent to
\[
\mathit{first}=y \lor (r(u,y)\wedge (\mathit{first}\nothookrightarrow -)).
\]
Applying the allocation axiom and an application of the consequence rule,
we obtain
\[
\begin{array}{c}
\{\forall \mathit{first}((\mathit{first}\nothookrightarrow-)\to (\mathit{first}=y \lor  r(u,y)))\}\\
{\mathit{first}:=\new(u)}\\
\{r(\mathit{first},y)\}.
\end{array}
\]
Renaming $\mathit{first}$ by the fresh variable $f$
does not affect $r$, so
\[
\begin{array}{c}
\{\forall f ((f \nothookrightarrow-)\to (f=y \lor  r(u,y)))\}\\
{\mathit{first}:=\new(u)}\\
\{r(\mathit{first},y)\}
\end{array}
\]
can be derived.
Also substituting $u$ for $\mathit{first}$ does not affect the definition of $r$.
It then suffices to observe  that
$r(\mathit{first},y)$ (trivially) implies
$\forall f ((f \nothookrightarrow-)\to (f=y \lor  r(\mathit{first},y)))$.


\section{Formalization in Coq}\label{sec:coq}

The main motivation behind formalizing results in a proof assistant is to 
rigorously check hand-written proofs.
%
%
%
%
For our formalization we used the dependently-typed calculus of inductive constructions as implemented by the Coq proof assistant.
We have used no axioms other than the axiom of function extensionality (for every two functions $f,g$ we have that $f=g$ if $f(x)=g(x)$ for all $x$). This means that we work with an underlying intuitionistic logic: we have not used the axiom of excluded middle for reasoning classically about propositions. However, the decidable propositions (propositions $P$ for which the excluded middle $P\lor \lnot P$ can be proven) allow for a limited form of classical reasoning.

We formalize the basic instructions of our programming language (assignment, look-up, mutation, allocation, and deallocation) and the semantics of basic instructions. For Boolean and arithmetic expressions we use a shallow embedding, so that those expressions can be directly given as a Coq term of the appropriate type (with a coincidence condition assumed, i.e. that values of expressions depend only on finitely many variables of the store).

There are two approaches in formalizing the semantics of assertions: shallow and deep embedding.
We have taken both approaches. In the first approach, the shallow embedding of assertions, we define assertions of DSL by their extension of satisfiability (i.e. the set of heap and store pairs in which the assertion is satisfied), that must satisfy a coincidence condition (assertions depend only on finitely many variables of the store) and a stability condition (see below). The definition of the modality operator follows from the semantics of programs, which includes basic control structures such as the {\bf while}-loop.
In the second approach, the deep embedding of assertions, assertions are modeled using an inductive type and we explicitly introduce two meta-operations on assertions that capture the heap update and heap clear modality. We have omitted the clauses for $\mbox{\bf emp}$ and $(e\mapsto e')$, since these could be defined as abbreviations, and we restrict to the basic instructions.

In the deep embedding we have no constructor corresponding to the program modality $[S]p$. Instead, two meta-operations denoted $p[\update{x} = e]$ and $p[\dispose{x}]$ are defined recursively on the structure of $p$. Crucially, we formalized and proven the following lemmas (the details are almost the same as showing the equivalences hold in the shallow embedding, Lemmas~\ref{def:update}~and~\ref{def:dispose-subst-CSL}):
\begin{lemma}[Heap update substitution lemma]\label{lem:update}\raggedright
${h,s  \models p[\update{x}:=e]\text{ iff }h[s(x):=s(e)],s  \models p}$.
\end{lemma}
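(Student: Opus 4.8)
The plan is to prove Lemma~\ref{lem:update} by structural induction on the deep-embedded assertion $p$, following clause-by-clause the recursive definition of the meta-operation $p[\update{x}:=e]$. Since the deep embedding has no constructor for the program modality, $p$ ranges only over $b$, $(e'\hookrightarrow e'')$, $(p\to q)$, $(\forall y\,p)$, $(p\sep q)$, and $(p\sepimp q)$, so there are exactly six cases, and each amounts to re-establishing semantically the corresponding equivalence already recorded for the heap update modality: the Boolean and points-to cases are \mbox{\bf E9} and \mbox{\bf E10}, the $\sep$ and $\sepimp$ cases are \mbox{\bf E11} and \mbox{\bf E12}, and the $\to$ and $\forall$ cases are the (logical-connective restrictions of the) equivalences of Lemma~\ref{lem:BE}. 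The only genuinely new content, compared with the proof of Lemma~\ref{def:update}, is that these arguments are now repackaged as a single induction showing that the purely syntactic operation $p[\update{x}:=e]$ implements the heap update modality.

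The routine cases go as follows. For $p=b$ the meta-operation leaves $b$ unchanged, and since the truth of a Boolean expression depends only on the store, which $\update{x}:=e$ does not touch, the equivalence is immediate. For $p=(e'\hookrightarrow e'')$ the meta-operation produces $(x=e'\wedge e''=e)\vee(x\not=e'\wedge e'\hookrightarrow e'')$; unfolding the semantics of $\hookrightarrow$ and the definition of $h[s(x):=s(e)]$ and splitting on whether $s(x)=s(e')$ closes the case, using that expressions do not refer to the heap so their values are unchanged. For $p=(q_1\to q_2)$ the meta-operation recurses on both subformulas; since heap update is a total deterministic operation, $h[s(x):=s(e)],s\models q_1\to q_2$ unfolds to an implication between the two satisfactions, and the two induction hypotheses finish it. For $p=(\forall y\,q)$ the meta-operation first renames $y$ (alpha-conversion) so that $y\neq x$ and $y\notin\mathit{var}(e)$, then distributes under the quantifier; the key observation is that for this renamed $y$ we have $s[y:=n](x)=s(x)$ and $s[y:=n](e)=s(e)$ for every $n$, so the induction hypothesis applied pointwise in $n$ gives the result.

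The two interesting cases are the separating connectives, and this is where the real bookkeeping lies. For $p=(q_1\sep q_2)$ the meta-operation yields $(q_1[\update{x}:=e]\sep q_2')\vee(q_1'\sep q_2[\update{x}:=e])$ with $q_i'$ abbreviating $q_i\wedge(x\nothookrightarrow -)$; I would reproduce the argument of \mbox{\bf E11}: given a split $h[s(x):=s(e)]=h_1\uplus h_2$ one locates the part containing $s(x)$, say $h_1$, notes $h_1(s(x))=s(e)$ and $h=h_1[s(x):=h(s(x))]\uplus h_2$, applies the induction hypothesis to turn $h_1[s(x):=h(s(x))],s\models q_1$ into satisfaction of $q_1[\update{x}:=e]$, and observes $s(x)\notin\mathit{dom}(h_2)$ so that $q_2'$ holds of $h_2$; the converse reassembles the updated split. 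For $p=(q_1\sepimp q_2)$ the meta-operation yields $q_1'\sepimp q_2[\update{x}:=e]$, and I would reproduce the chain of equivalences in the proof of \mbox{\bf E12}: quantifying over heaps $h'$ disjoint from $h[s(x):=s(e)]$ is the same as quantifying over heaps $h'$ disjoint from $h$ with $s(x)\notin\mathit{dom}(h')$, the precondition $q_1$ then becomes $q_1'$, and $(h\uplus h')[s(x):=s(e)],s\models q_2$ is handled by the induction hypothesis at $q_2$.

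I expect the main obstacle to be organizational rather than conceptual: getting the heap-splitting identities in the $\sep$ and $\sepimp$ cases ($h_1(s(x))=s(e)$, $h=h_1[s(x):=h(s(x))]\uplus h_2$, and $(h\uplus h')[s(x):=s(e)]=h_1[s(x):=s(e)]\uplus h_2$) to go through cleanly as side lemmas usable under the induction, and making precise, in the $\forall$ case, the capture-avoiding renaming so that $\update{x}:=e$ does not interfere with the bound variable. Both are exactly the steps already carried out in the proof of Lemma~\ref{def:update}, so no new mathematical difficulty should arise; the work is in stating the meta-operation and these auxiliary heap identities in a form the induction can consume (which is precisely what the Coq development does).
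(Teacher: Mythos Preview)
Your proposal is correct and matches the paper's approach: the paper does not spell out a proof of Lemma~\ref{lem:update} but simply remarks that the details are ``almost the same as showing the equivalences hold in the shallow embedding, Lemmas~\ref{def:update} and~\ref{def:dispose-subst-CSL},'' and your structural induction on the deep-embedded $p$, with each case discharging the corresponding clause \mbox{\bf E9}--\mbox{\bf E12} (and the logical-connective instances of \mbox{\bf E2}, \mbox{\bf E3}), is exactly that. Your identification of the heap-splitting identities and the capture-avoiding renaming as the only nontrivial bookkeeping is also in line with what the Coq development has to handle.
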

\begin{lemma}[Heap clear substitution lemma]\label{lem:weakdispose}\raggedright
${h,s\models p[\dispose{x}]\text{ iff }h[s(x):=\bot],s\models p}$.
\end{lemma}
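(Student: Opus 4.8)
The plan is to prove Lemma~\ref{lem:weakdispose} by structural induction on $p$, where in the deep embedding the meta-operation $p[\dispose{x}]$ is defined clause by clause exactly as the equivalences \mbox{\bf E13}--\mbox{\bf E16} of Lemma~\ref{def:dispose-subst-CSL} prescribe: on a Boolean $b$ it is the identity; on $(e\hookrightarrow e')$ it is $(x\neq e)\wedge(e\hookrightarrow e')$; it commutes with the logical connectives, so $(p\to q)[\dispose{x}]$ is $p[\dispose{x}]\to q[\dispose{x}]$ and $(\forall y\,p)[\dispose{x}]$ is $\forall y\,(p[\dispose{x}])$ for $y$ fresh; it distributes over the separating conjunction, $(p\sep q)[\dispose{x}]$ being $p[\dispose{x}]\sep q[\dispose{x}]$; and on $(p\sepimp q)$ it is $((p\wedge(x\nothookrightarrow-))\sepimp q[\dispose{x}])\wedge\forall y\,(p[\update{x}:=y]\sepimp q[\update{x}:=y])$ for $y$ fresh, where $p[\update{x}:=y]$ is the heap-update meta-operation. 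Because the clauses \mbox{\bf E9}--\mbox{\bf E12} governing $p[\update{x}:=y]$ never mention heap clear, Lemma~\ref{lem:update} is proved first by an entirely analogous induction, and is then used as a black box in the $\sepimp$ case here.

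The easy cases go through essentially verbatim from the shallow-embedding proofs. For $b$ one invokes that the value of $b$ is independent of the heap; for $(e\hookrightarrow e')$ one unfolds the definition of $h[s(x):=\bot]$ exactly as in the proof of \mbox{\bf E14}; for $p\to q$ and $\forall y\,p$ the induction hypotheses apply directly, using for the quantifier that the fresh $y$ is distinct from $x$, so that updating the store at $y$ commutes with clearing the heap at $s(x)$. The separating-conjunction case rests on the correspondence already recorded after \mbox{\bf E15}: every split $h=h_1\uplus h_2$ yields $h[s(x):=\bot]=h_1[s(x):=\bot]\uplus h_2[s(x):=\bot]$, and conversely every split of $h[s(x):=\bot]$ is obtained in this way; applying the two induction hypotheses on the pieces then closes the case in both directions.

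The main obstacle is the separating-implication case, which is a faithful replay of the argument for \mbox{\bf E16}. Unfolding $h,s\models (p\sepimp q)[\dispose{x}]$ produces the two conjuncts, while the target $h[s(x):=\bot],s\models p\sepimp q$ ranges over all $h'$ disjoint from $h[s(x):=\bot]$ --- and the subtlety is that such an $h'$ may contain $s(x)$. One case-splits on whether $s(x)\in\mathit{dom}(h')$; this split is legitimate in the intuitionistic setting of the Coq development because heaps are finitely-based partial functions, so membership in $\mathit{dom}(h')$ is decidable and no appeal to excluded middle is required. If $s(x)\notin\mathit{dom}(h')$ then $h'$ is already disjoint from $h$, and the first conjunct together with the induction hypothesis for $q$ gives $h[s(x):=\bot]\uplus h',s\models q$. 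If $s(x)\in\mathit{dom}(h')$, put $n=h'(s(x))$ and pass to $h'[s(x):=\bot]$, which is disjoint from $h$; instantiate the second conjunct with $y$ taking the value $n$, and apply Lemma~\ref{lem:update} twice to move between the heap-update meta-operation and the genuine override $[s(x):=n]$ on heaps. The bookkeeping identity $h[s(x):=\bot]\uplus h'=(h\uplus h'[s(x):=\bot])[s(x):=n]$, together with the disjointness facts, then discharges this direction; the converse is symmetric, separately feeding the two conjuncts from the assumed $p\sepimp q$. No step is conceptually hard beyond this case analysis --- the only recurring care is commuting store updates at the fresh variable with the heap operations and staying within the constructive discipline.
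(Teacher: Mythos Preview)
Your proposal is correct and matches the paper's approach. The paper does not spell out a standalone proof of this lemma but explicitly states that the argument for the deep-embedding substitution lemmas is ``almost the same'' as the proofs of the shallow-embedding equivalences in Lemmas~\ref{def:update} and~\ref{def:dispose-subst-CSL}---that is, structural induction on $p$, with the $\sepimp$ case handled by the case split on whether $s(x)\in\mathit{dom}(h')$ and an appeal to the already-established heap-update lemma, exactly as you describe.
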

By also formalizing a deep embedding, we show that the modality operator can be defined entirely on the meta-level by introducing meta-operations on formulas that are recursively defined by the structure of assertions: this captures Theorem~\ref{thm:completeness}. The shallow embedding, on the other hand, is easier to show that our approach can be readily extended to complex programs including {\bf while}-loops.

In both approaches, the semantics of assertions is classical, although we work in an intuitionistic meta-logic. We do this by employing a double negation translation, following the set-up by R.~O'Connor \cite{o2011classical}. In particular, we have that our satisfaction relation $h,s\models p$ is stable, i.e.~$\lnot\lnot(h,s\models p)$ implies $h,s\models p$. This allows us to do classical reasoning on the image of the higher-order semantics of our assertions.

The source code of our formalization is accompanied with this paper as a digital artifact (which includes the files \texttt{shallow/Language.v} and \texttt{shallow/Proof.v}, and the files \texttt{deep/Heap.v}, \texttt{deep/Language.v}, \texttt{deep/Classical.v}).
The artifact consists of the following files:
\begin{itemize}
\item \texttt{shallow/Language.v}: Provides a shallow embedding of Boolean expressions and arithmetic expressions, and a shallow embedding of our assertion language, as presented in the prequel.
\item \texttt{shallow/Proof.v}: Provides proof of the equivalences ({\bf E1-16}), and additionally standard equivalences for modalities involving complex programs.
\item \texttt{deep/Heap.v}: Provides an axiomatization of heaps as partial functions.
\item \texttt{deep/Language.v}: Provides a shallow embedding of Boolean expressions and arithmetic expressions, and a deep embedding of our assertion language, on which we inductively define the meta operations of heap update and heap clear. We finally formalize Hoare triples and proof systems using weakest precondition and strongest postcondition axioms for the basic instructions.
\item \texttt{deep/Classical.v}: Provides the classical semantics of assertions, and the strong partial correctness semantics of Hoare triples. Further it provides proofs of substitution lemmas 
corresponding to our meta-operators. Finally, it provides proofs of the
soundness and completeness of the aforementioned proof systems.
\end{itemize}

\section{Conclusion and related work}\label{sec:conclusion}
To the best of our knowledge no other works exist
that study dynamic logic extensions of SL.
We have shown how we can combine the standard programming logics in SL
with a new DSL axiomatization of both weakest preconditions and strongest postconditions.
These new axiomatizations in DSL have the so-called property of \emph{gracefulness}:\footnote{The term `graceful', coined by J.C. Blanchette \cite{vukmirovic2022extending}, comes from higher-order automated theorem proving where it means that a higher-order prover does not perform significantly worse on first-order problems than existing first-order provers that lack the ability to reason about higher-order problems.}
any first-order postcondition gives rise to
a first-order weakest precondition
(for any basic instruction).  A property that  existing axiomatizations of SL, such as given by C.~Bannister, P.~H\"ofner and G.~Klein~\cite{BannisterHK18}, and M.~Faisal~Al~Ameen and M.~Tatsuta~\cite{FAISALALAMEEN201673}, lack. (See also \cite{TATSUTA20191}.)
As a simple example, in our approach
$[[x]:=0](y\hookrightarrow z)$
can be resolved  to the first-order formula
$$
(x\hookrightarrow-)\wedge ((y=x\wedge z=0)\vee (y\not=x\wedge y\hookrightarrow z))
$$
by applying
the above equivalences \mbox{\bf E6}
and \mbox{\bf E10}.
The standard rule for backwards reasoning in \cite{Reynolds02} 
however gives the weakest precondition:
$$
(x\mapsto -)\sep ((x\mapsto 0)\sepimp (y\hookrightarrow z))
$$
Despite their different formulations,
both formulas characterize $[[x]:=0](y\hookrightarrow z)$, and thus must be equivalent.
In fact, the equivalence has been proven
in our Coq formalization (Section~\ref{sec:coq}).
Surprisingly,
this however  exceeds the capability of all the automated SL provers in the benchmark competition for SL \cite{sighireanu2019sl}. In particular, only the CVC4-SL tool \cite{reynolds2016decision} supports the fragment of SL that includes the separating implication connective. However, from our own experiments with that tool, we found that it produces an incorrect counter-example and reported this as a bug to one of the maintainers of the project \cite{andrewcvc}. In fact, the latest version, CVC5-SL, reports the same input as `unknown', indicating that the tool is incomplete.
Furthermore, we have investigated whether the equivalence of these formulas can be proven in an interactive tool for reasoning about SL: the Iris project \cite{jung2018iris}. However, also in that system it is not possible to show the equivalence of these assertions, at least not without adding additional axioms to its underlying model \cite{robbertpersonal}.
On the other hand, the equivalence between the above two
formulas can be expressed in quantifier-free separation logic, for which a complete axiomatization of all valid
formulas has been given in \cite{DemriLM21}.

In general, the calculation of $[S]p$ by means of a compositional analysis of
$p$, in contrast with the standard approach, does not generate additional \emph{nesting} of the separating connectives.
On the other hand, the compositional analysis generates a case distinction
in the definitions of $[\update{x}:=e](p\sep q)$ and
$[\dispose{x}](p\sepimp q)$.
How  the combined application of the two approaches works in practice 
needs to be further investigated.
Such an investigation will also involve the
use of the modalities for the basic instructions
in the generation of the verification conditions
of a program (as is done for example in the KeY tool \cite{DBLP:series/lncs/10001} for the verification of Java programs),
which allows to \emph{postpone} and \emph{optimize}
their actual application.
For example, the  equivalence
\[
[x:=e][\update{y}:=e']p \equiv [\update{y}:=e'[e/x]][x:=e]p
\]
allows to resolve  the simple assignment modality 
by `pushing it inside'.

Other works that investigate weakest preconditions in SL are briefly discussed below.
For example, \cite{BannisterHK18} investigates both weakest preconditions and strongest postconditions in SL,
also obtained through a transformational approach.
However, the transformation uses other  separating connectives (like \emph{septraction}), and thus is not graceful.
On the other hand, in \cite{MuraliPLM20} an alternative
logic is introduced which, instead of the separating connectives,
extends standard first-order logic with 
an operator ${\it Sp}(p)$ which captures the parts of the heap
the (first-order) formula $p$ depends on.
Thus also \cite{MuraliPLM20} goes beyond first-order, and is not graceful.
But the main motivation of that work coincides with ours:
avoiding unnecessary reasoning about the separating connectives.

Our artifact formalizes the syntax and semantics of programs and assertions of SL.
We plan to further extend our formalization  to support practical program verification,
and investigate how to integrate our approach in Iris~\cite{jung2018iris}:
we will consider how DSL
can also work for a shallow embedding of SL.
Then the generated verification conditions
require a proof of the validity of corresponding assertions in SL,
which can be discharged by providing a proof directly in Coq.
Further, we will investigate the application of DSL to concurrent SL \cite{brookes2016concurrent} and permission-based SL \cite{amighi2015permission}.

\bibliographystyle{entics}
\bibliography{refs}

\begin{thebibliography}{10}
\providecommand{\url}[1]{\texttt{#1}}
\providecommand{\urlprefix}{ }
\providecommand{\eprint}[2][]{\url{#2}}

\bibitem{DBLP:series/lncs/10001}
Ahrendt, W., B.~Beckert, R.~Bubel, R.~H{\"{a}}hnle, P.~H. Schmitt and
  M.~Ulbrich, editors, \emph{Deductive Software Verification - The KeY Book -
  From Theory to Practice}, volume 10001 of \emph{Lecture Notes in Computer
  Science}, Springer (2016), ISBN 978-3-319-49811-9.
\newline\urlprefix\url{https://doi.org/10.1007/978-3-319-49812-6}

\bibitem{amighi2015permission}
Amighi, A., C.~Hurlin, M.~Huisman and C.~Haack, \emph{Permission-based
  separation logic for multithreaded java programs}, Logical Methods in
  Computer Science \textbf{11} (2015).
  \newline\urlprefix\url{https://doi.org/10.2168/LMCS-11(1:2)2015}

\bibitem{BannisterHK18}
Bannister, C., P.~H{\"{o}}fner and G.~Klein, \emph{Backwards and forwards with
  separation logic}, in: J.~Avigad and A.~Mahboubi, editors, \emph{Interactive
  Theorem Proving - 9th International Conference, {ITP} 2018, Held as Part of
  the Federated Logic Conference, FloC 2018, Oxford, UK, July 9-12, 2018,
  Proceedings}, volume 10895 of \emph{Lecture Notes in Computer Science}, pages
  68--87, Springer (2018).
  \newline\urlprefix\url{https://doi.org/10.1007/978-3-319-94821-8_38}

\bibitem{brookes2016concurrent}
Brookes, S. and P.~W. O'Hearn, \emph{Concurrent separation logic}, ACM SIGLOG
  News \textbf{3}, pages 47--65 (2016).
  \newline\urlprefix\url{https://dl.acm.org/doi/10.1145/2984450.2984457}

\bibitem{10.1145/3408998}
Chargu\'{e}raud, A., \emph{Separation logic for sequential programs (functional
  pearl)}, Proc. ACM Program. Lang. \textbf{4} (2020).
\newline\urlprefix\url{https://doi.org/10.1145/3408998}

\bibitem{DemriLM21}
Demri, S., {\'{E}}.~Lozes and A.~Mansutti, \emph{A complete axiomatisation for
  quantifier-free separation logic}, Log. Methods Comput. Sci. \textbf{17}
  (2021).
\newline\urlprefix\url{https://doi.org/10.46298/lmcs-17(3:17)2021}

\bibitem{dijkstra1976}
Dijkstra, E., \emph{A Discipline of Programming}, Prentice-Hall (1976), ISBN 978-0132158718.

\bibitem{FAISALALAMEEN201673}
{Faisal Al Ameen}, M. and M.~Tatsuta, \emph{Completeness for recursive
  procedures in separation logic}, Theoretical Computer Science \textbf{631},
  pages 73--96 (2016), ISSN 0304-3975.
\newline\urlprefix\url{https://doi.org/https://doi.org/10.1016/j.tcs.2016.04.004}

\bibitem{Harel79}
Harel, D., \emph{First-Order Dynamic Logic}, volume~68 of \emph{Lecture Notes
  in Computer Science}, Springer (1979), ISBN: 978-3-540-09237-7.

\bibitem{10.1145/373243.375719}
Ishtiaq, S.~S. and P.~W. O'Hearn, \emph{BI as an assertion language for mutable
  data structures}, SIGPLAN Not. \textbf{36}, page 14–26 (2001), ISSN
  0362-1340.
\newline\urlprefix\url{https://doi.org/10.1145/373243.375719}

\bibitem{jung2018iris}
Jung, R., R.~Krebbers, J.-H. Jourdan, A.~Bizjak, L.~Birkedal and D.~Dreyer,
  \emph{Iris from the ground up: A modular foundation for higher-order
  concurrent separation logic}, Journal of Functional Programming \textbf{28}
  (2018).
  \newline\urlprefix\url{https://doi.org/10.1017/S0956796818000151}
  

\bibitem{robbertpersonal}
Krebbers, R., Personal communication.

\bibitem{MuraliPLM20}
Murali, A., L.~Pe{\~{n}}a, C.~L{\"{o}}ding and P.~Madhusudan, \emph{A
  first-order logic with frames}, in: P.~M{\"{u}}ller, editor,
  \emph{Programming Languages and Systems - 29th European Symposium on
  Programming, {ESOP} 2020, Held as Part of the European Joint Conferences on
  Theory and Practice of Software, {ETAPS} 2020, Dublin, Ireland, April 25-30,
  2020, Proceedings}, volume 12075 of \emph{Lecture Notes in Computer Science},
  pages 515--543, Springer (2020).
  \newline\urlprefix\url{https://doi.org/10.1145/3583057}

\bibitem{o2011classical}
O'Connor, R., \emph{Classical mathematics for a constructive world},
  Mathematical Structures in Computer Science \textbf{21}, pages 861--882
  (2011).
    \newline\urlprefix\url{https://doi.org/10.1017/S0960129511000132}

\bibitem{DBLP:conf/csl/OHearnRY01}
O'Hearn, P.~W., J.~C. Reynolds and H.~Yang, \emph{Local reasoning about
  programs that alter data structures}, in: L.~Fribourg, editor, \emph{Computer
  Science Logic, 15th International Workshop, {CSL} 2001. 10th Annual
  Conference of the EACSL, Paris, France, September 10-13, 2001, Proceedings},
  volume 2142 of \emph{Lecture Notes in Computer Science}, pages 1--19,
  Springer (2001).
\newline\urlprefix\url{https://doi.org/10.1007/3-540-44802-0\_1}

\bibitem{andrewcvc}
Reynolds, A., Personal communication.

\bibitem{reynolds2016decision}
Reynolds, A., R.~Iosif, C.~Serban and T.~King, \emph{A decision procedure for
  separation logic in smt}, in: \emph{International Symposium on Automated
  Technology for Verification and Analysis}, pages 244--261, Springer (2016).
  \newline\urlprefix\url{https://doi.org/10.1007/978-3-319-46520-3_16}

\bibitem{reynolds2000intuitionistic}
Reynolds, J.~C., \emph{Intuitionistic reasoning about shared mutable data
  structure}, Millennial perspectives in computer science \textbf{2}, pages
  303--321 (2000).
  \newline Available online at \urlprefix\url{https://www.researchgate.net/publication/2549226_Intuitionistic_Reasoning_about_Shared_Mutable_Data_Structure}

\bibitem{Reynolds02}
Reynolds, J.~C., \emph{Separation logic: {A} logic for shared mutable data
  structures}, in: \emph{17th {IEEE} Symposium on Logic in Computer Science
  {(LICS} 2002), 22-25 July 2002, Copenhagen, Denmark, Proceedings}, pages
  55--74, {IEEE} Computer Society (2002).
\newline\urlprefix\url{https://doi.org/10.1109/LICS.2002.1029817}

\bibitem{sighireanu2019sl}
Sighireanu, M., J.~A. Navarro~P{\'e}rez, A.~Rybalchenko, N.~Gorogiannis,
  R.~Iosif, A.~Reynolds, C.~Serban, J.~Katelaan, C.~Matheja, T.~Noll
  \emph{et~al.}, \emph{Sl-comp: competition of solvers for separation logic},
  in: \emph{International Conference on Tools and Algorithms for the
  Construction and Analysis of Systems}, pages 116--132, Springer (2019).
  \newline\urlprefix\url{https://doi.org/10.1007/978-3-030-17502-3_8}

\bibitem{TATSUTA20191}
Tatsuta, M., W.-N. Chin and M.~F. {Al Ameen}, \emph{Completeness and
  expressiveness of pointer program verification by separation logic},
  Information and Computation \textbf{267}, pages 1--27 (2019), ISSN 0890-5401.
\newline\urlprefix\url{https://doi.org/https://doi.org/10.1016/j.ic.2019.03.002}

\bibitem{vukmirovic2022extending}
Vukmirovi{\'c}, P., J.~Blanchette, S.~Cruanes and S.~Schulz, \emph{Extending a
  brainiac prover to lambda-free higher-order logic}, International Journal on
  Software Tools for Technology Transfer \textbf{24}, pages 67--87 (2022).
  \newline\urlprefix\url{https://doi.org/10.1007/s10009-021-00639-7}

\end{thebibliography}
\clearpage
\appendix
\section{Appendix}

\begin{figure}[ht]
\begin{framed}
$S \Coloneqq x := [e] \mid [x] := e \mid x := \new(e) \mid \delete(x) \mid $\\
\hspace*{2.5em}$x := e \mid S; S \mid \IF b \THEN S \ELSE S \FI \mid \WHILE b \DO S \OD$\\[1em]
$\langle x:=e,h,s\rangle \Rightarrow (h,s[x:=s(e)])$,\\
$\langle x:=[e],h,s\rangle \Rightarrow (h,s[x:=h(s(e))])$ if $s(e)\in\mbox{\it dom}(h)$,\\
$\langle x:=[e],h,s\rangle \Rightarrow \mbox{\bf fail}$ if $s(e)\not\in\mbox{\it dom}(h)$,\\
$\langle [x]:=e,h,s\rangle \Rightarrow (h[s(x):=s(e)],s)$ if $s(x)\in\mbox{\it dom}(h)$,\\
$\langle [x]:=e,h,s\rangle \Rightarrow \mbox{\bf fail}$ if $s(x)\not\in\mbox{\it dom}(h)$,\\
$\langle x:=\new(e),h,s\rangle \Rightarrow (h[n:=s(e)],s[x:=n])$ where $n\not\in\mbox{\it dom}(h)$.\\
$\langle \delete(x),h,s\rangle \Rightarrow (h[s(x):=\bot],s)$ if $s(x)\in\mbox{\it dom}(h)$,\\
$\langle \delete(x),h,s\rangle \Rightarrow \mbox{\bf fail}$ if $s(x)\not\in\mbox{\it dom}(h)$,\\
$\langle S_1;S_2,h,s\rangle \Rightarrow o$ if $\langle S_1,h,s\rangle \Rightarrow (h',s')$ and $\langle S_2,h',s'\rangle \Rightarrow o$,\\
$\langle S_1;S_2,h,s\rangle \Rightarrow \mbox{\bf fail}$ if $\langle S_1,h,s\rangle \Rightarrow \mbox{\bf fail}$,\\
$\langle \IF b \THEN S_1 \ELSE S_2 \FI,h,s\rangle \Rightarrow o$ if $s(b)=\T$ and $\langle S_1,h,s\rangle \Rightarrow o$,\\
$\langle \IF b \THEN S_1 \ELSE S_2 \FI,h,s\rangle \Rightarrow o$ if $s(b)=\F$ and $\langle S_2,h,s\rangle \Rightarrow o$,\\
$\langle \WHILE b \DO S \OD,h,s\rangle \Rightarrow o$ if $s(b)=\T$, $\langle S,h,s\rangle \Rightarrow (h',s')$, and $\langle \WHILE b \DO S \OD,h',s'\rangle \Rightarrow o$,\\
$\langle \WHILE b \DO S \OD,h,s\rangle \Rightarrow \mbox{\bf fail}$ if $s(b)=\T$ and $\langle S,h,s\rangle \Rightarrow \mbox{\bf fail}$,\\
$\langle \WHILE b \DO S \OD,h,s\rangle \Rightarrow (h,s)$ if $s(b)=\F$.
\end{framed}
\caption{Syntax and semantics of heap manipulating programs.}
\label{fig:full-syntax}
\end{figure}

\begin{figure}[ht]
\begin{framed}
$$
\infer{\HT{p}{S}{q}}{\models p\to p'&\HT{p'}{S}{q'}&\models q'\to q}
$$
$$
\infer{\HT{p}{S_1;S_2}{q}}{\HT{p}{S_1}{r} & \HT{r}{S_2}{q}}
$$
$$
\infer{\HT{p}{\IF b\THEN S_1\ELSE S_2\FI}{q}}{\HT{p\land b}{S_1}{q} & \HT{p\land \lnot b}{S_2}{q}}
$$
$$
\infer{\HT{p}{\WHILE b\DO S\OD}{p\land \lnot b}}{\HT{p\land b}{S}{p}}
$$
\vspace*{-15pt}
\end{framed}
\vspace{-1em}
\caption{Hoare's standard proof rules.}
\label{fig:hoare}
\end{figure}


\begin{figure}[ht]
\begin{framed}
\vspace*{-6pt}
$$
{\HT{p}{x:=e}{\exists y(p[x := y] \land x=e[x := y])}}{}
$$
$$
{\HT{p\wedge (e\hookrightarrow -)}{x:=[e]}{(e\mapsto x)\sep \lnot((e\mapsto x)\sepimp \lnot p)}}
$$
$$
{\HT{p\wedge (x\hookrightarrow -)}{[x]:=e}{(x\mapsto e)\sep \lnot((x\mapsto -)\sepimp\lnot p)}}{}
$$
$$
{\HT{p}{x:=\new(e)}{(x\mapsto e)\sep p}}{}
$$
$$
{\HT{p\wedge (x\hookrightarrow -)}{\delete(x)}{\lnot((x\mapsto -)\sepimp \lnot p)}}
$$
\vspace*{-15pt}
\end{framed}
\vspace{-1em}
\caption{Global strongest postcondition axiomatization (cf.~\cite{Reynolds02,BannisterHK18}),
assuming $x$ does not occur in $e$ in the axioms for look-up, mutation, and allocation.}
\label{fig:SP-old}
\end{figure}

\end{document}